\documentclass[aps,pra,superscriptaddress,12pt,longbibliography]{revtex4-2}

\def\cmntsoff{}

\renewcommand{\thesection}{\arabic{section}}

\makeatletter
\renewcommand{\p@subsection}{}
\renewcommand{\p@subsubsection}{}
\makeatother

\usepackage{subfiles}

\usepackage{amsmath,amsfonts,amssymb,amsthm}
\usepackage{mathtools}
\usepackage{bm}
\usepackage{setspace}
\usepackage{optidef}
\usepackage[normalem]{ulem} 
\usepackage{stmaryrd}       
\usepackage{dsfont}         
\usepackage{expdlist}
\usepackage{physics}
\usepackage{braket}
\usepackage{verbatim}
\usepackage{xspace}

\makeatletter
\newcommand{\kbordermatrix}[1]{%
  \vcenter{\hbox{%
    \begingroup
      \m@th
      \normalbaselines
      \everycr{}
      \arraycolsep=6pt
      \tabcolsep=6pt
      \delimiterfactor=900
      \delimitershortfall=5pt
      \def\\{\cr}
      \bordermatrix{#1}
    \endgroup
  }}%
}
\makeatother

\usepackage{pgf}
\usepackage{tikz}
\usetikzlibrary{decorations.pathmorphing}
\usetikzlibrary{arrows,automata}
\usetikzlibrary{shapes.geometric}
\usetikzlibrary{arrows.meta}
\usepackage[utf8]{inputenc}
\usepackage{verbatim}
\usepackage{textcomp} 

\usepackage{graphicx}
\usepackage{xcolor}
\usepackage{hyperref}
\hypersetup{pdfstartview=}
\usepackage{url}

\usepackage{alltt}
\usepackage{moreverb}

\usepackage{xr}

\usepackage{enumitem}
\setlist{nosep} 
\newlist{enuma}{enumerate}{10}
\setlist[enuma]{label*=\arabic*.}
\setlistdepth{10}

\providecommand{\ignore}[1]{}

\newif\ifcmnt
\cmnttrue
\ifdefined\cmntsoff\cmntfalse\fi

\ifcmnt
    \providecommand{\aucmnt}[1]{#1}

    \providecommand{\Pcolor}{\color{gray}}
\else
    \providecommand{\aucmnt}[1]{}

    \providecommand{\Pcolor}{}
\fi

\newcommand{\Ac}[1]{\aucmnt{{\Pcolor \textbf{[}Authors' record: {\color{gray}#1}\textbf{]}}}}

\newcommand{\Prob}{\mathbb{P}}
\newcommand{\Exp}{\mathbb{E}}

\newcommand{\rls}{\mathbb{R}}

\newcommand{\defeq}{\doteq}

\newcommand{\cA}{\mathcal{A}}
\newcommand{\cB}{\mathcal{B}}
\newcommand{\cC}{\mathcal{C}}

\newcommand{\cF}{\mathcal{F}}
\newcommand{\cG}{\mathcal{G}}

\newcommand{\cL}{\mathcal{L}}

\newcommand{\cP}{\mathcal{P}}

\newcommand{\cS}{\mathcal{S}}

\newcommand{\cU}{\mathcal{U}}

\newcommand{\one}{\mathds{1}}

\numberwithin{equation}{section}
\newtheorem{theorem}{Theorem}[section]
\newtheorem*{theorem*}{Theorem}
\newtheorem{lemma}[theorem]{Lemma}
\newtheorem*{lemma*}{Lemma}

\newtheorem{proposition}[theorem]{Proposition}

\newcommand{\opn}{\operatorname}

\let\originalleft\left
\let\originalright\right
\renewcommand{\left}{\mathopen{}\mathclose\bgroup\originalleft}
\renewcommand{\right}{\aftergroup\egroup\originalright}

\newcommand{\ML}{\text{ML}}

\newcommand{\teps}{\tilde{\epsilon}}
\newcommand{\me}{m}

\newcommand{\condfidl}{0.9986}
\newcommand{\condfidh}{0.9993}
\newcommand{\conderrl}{0.0007}
\newcommand{\conderrh}{0.0014}
\newcommand{\totalleakl}{0.0098}
\newcommand{\totalleakh}{0.0108}

\begin{document}
\title{Leakage benchmarking with decohering subgroups in the presence of preparation and measurement errors}
\author{Shawn Geller}
\affiliation{National Institute of Standards and Technology, Boulder, Colorado 80305, USA}
\author{Tongyu Zhao}
\affiliation{National Institute of Standards and Technology, Boulder, Colorado 80305, USA}
\affiliation{Department of Physics, University of Colorado, Boulder, Colorado, 80309, USA}
\author{Alex Kwiatkowski}
\affiliation{National Institute of Standards and Technology, Boulder, Colorado 80305, USA}
\author{Scott Glancy}
\affiliation{National Institute of Standards and Technology, Boulder, Colorado 80305, USA}
\affiliation{Department of Physics, University of Colorado, Boulder, Colorado, 80309, USA}
\author{Raymond W. Simmonds}
\affiliation{National Institute of Standards and Technology, Boulder, Colorado 80305, USA}
\author{Emanuel Knill}
\affiliation{National Institute of Standards and Technology, Boulder, Colorado 80305, USA}
\affiliation{Department of Physics, University of Colorado, Boulder, Colorado, 80309, USA}
\affiliation{Center for Theory of Quantum Matter, University of Colorado, Boulder, Colorado 80309, USA}

\date{\today}
\begin{abstract}
A common problem with physical qubits is that the state can leak outside the computational subspace. To characterize the problem, it is desirable to measure both the leakage probability and the fidelity of gates inside the computational subspace.  Randomized benchmarking is a readily accessible method for estimating average gate fidelities and can be modified to account for leakage. However, in the presence of state preparation and measurement errors (SPAM) there are parameter degeneracies, and fidelities and leakage become underdetermined.  We study the example of qudits with any number of leakage levels under the assumption of gate-independent noise. We show that randomized benchmarking sequences with leakage level decoherence can be analyzed as a Markov process with noisy classical preparations and measurements. This result generalizes to randomized benchmarking with random gates from an arbitrary group that contains a phase-decohering subgroup. We characterize the parameter degeneracy in inferring the Markov process, preparations and measurements. Based on this characterization we develop a measurement and inference method taking advantage of known conservative bounds on SPAM to constrain fidelity, leakage and other parameters of the Markov process, state preparations and measurements.  As part of our study, we experimentally implemented the measurement method for a qubit defined on the one-excitation subspace of two transmons, for which there are two leakage levels. We determined confidence intervals accounting for parameter degeneracy for a number of performance parameters of the system.  For leakage and fidelity, we compare these intervals to those obtained for several conventional methods for leakage benchmarking.
\end{abstract}
\maketitle
\tableofcontents

\section{Introduction}

Quantum information is normally realized as qubits defined as
subsystems of a larger physical system.  The state space of the
physical system therefore contains states orthogonal to the
computational subspace spanned by the states of the qubits. The span
of such orthogonal states is called the leakage subspace of the
physical system. We focus on the case where the the logical states of
the qubits are pure states of the physical system, so that the qubits'
state space is a subspace rather than a more general subsystem.
Storage and manipulation of the qubit states is imperfect and permits
transitions between the computational and leakage subspaces.  A
challenge for characterizing the performance of the qubits is to
measure fidelities of gates, their leakage from the computational to
the leakage subspace, and their seepage from the leakage subspace to
the computational subspace. Errors associated with leakage are often
detectable and therefore easier to control than other errors. As a
result, the fidelity of gates conditional on not having leaked is of
significant interest.

One way to meet the qubit performance characterization challenge in
the presence of leakage is to employ techniques from randomized
benchmarking. Such techniques have been developed and experimentally
implemented.  Theoretical works include
Refs.~\cite{chasseurCompleteRandomizedBenchmarking2015,epstein2014investigating,barnhillExtensionsRandomizedBenchmarking2015,woodQuantificationCharacterizationLeakage2018,andrews2019quantifying,wu2024leakage,Claes2021,ChenRBLeakage2025}
and typically rely on assumptions about the ability to prepare or
measure states of the computational subspace, and they exploit
twirling techniques that decohere the leakage subspace to various
degrees.  These techniques often relate to filtered randomized
benchmarking as developed in generality in
Ref.~\cite{helsen2022generalrb}. Experimental works include studies
with superconducting qubits~\cite{chen2016measuring}, quantum-dot
qubits~\cite{andrews2019quantifying} and atomic
qubits~\cite{muniz2025high,radnaev2025universal,ChenRBLeakage2025,sutherland2025subspace}.
For a helpful overview of the theoretical and experimental work on
leakage benchmarking, see Ref.~\cite{ChenRBLeakage2025}.

A commonly invoked assumption is that it is possible to reliably
prepare and measure states in the computational subspace, with varying
restrictions on the behavior of measurement operators on the leakage
subspace. This assumption constrains state preparation and measurement
error (SPAM) in the physical system. In the absence of such
constraints, with sufficient twirling control as in filtered
randomized benchmarking, it is possible to infer only the eigenvalues
of the twirled noise process. However, as noted in
Ref.~\cite{ChenRBLeakage2025} (see the text after Eq. 49 there), these
eigenvalues can relate poorly to parameters such as average gate
fidelity, leakage and seepage. Specifically, such parameters are not
determined by the probabilities of measurement outcomes. Here we relax
the SPAM constraints so that with reasonable and conservative
assumptions on the quality of preparation and measurement, it is
possible to obtain good confidence intervals on parameters such as
fidelities, leakage and seepage.

An approach for learning leakage and gate fidelities is to apply
randomized benchmarking with Clifford gates on the computational
subspace. If the leaked levels naturally decohere, then according to
the analysis of Ref.~\cite{wu2024leakage} leakage probabilities can be
extracted by analyzing effective classical Markov processes. This
analysis does not support determining average gate fidelities.  To
learn more about the system, we consider the general situation of a
\(d\)-state computational subsystem with \(l\) leaked levels in the
presence of decoherence of the leaked levels. For this situation we
apply sequences of gates randomly chosen from a subspace
representation of a group whose physical realization has
gate-independent noise. The gates need not be unitary, in particular
they may include decoherence. We show that if the group contains a
computational-basis-decohering subgroup, such random sequences can be
modeled as a \(d+l\)-state Markov process with noisy classical state
preparations and measurements. The group need not be a \(2\)-design or
act irreducibly on the computational subspace.  We refer to randomized
benchmarking with such groups and analyzed as a Markov process as
decohering subgroup randomized benchmarking (DSRB).

Most approaches to randomized benchmarking with
leakage~\cite{woodQuantificationCharacterizationLeakage2018,wu2024leakage,hashim2025practical}
require sufficient control to decohere or depolarize the leaked
states. In general randomized benchmarking with groups that are not
\(2\)-designs can be handled with filtered
benchmarking~\cite{helsen2022generalrb}. DSRB enables a full reduction
to a Markov chain model with multiple state preparations and
measurements by taking advantage of a subgroup that phase-decoheres a
basis under twirling with this subgroup. The full benchmarking group's
representation theory implies constraints on the Markov process.  For
example if the group's action is a \(2\)-design, the usual reduction
to one fidelity parameter applies and this fidelity parameter can be
determined in the presence of SPAM. However, for general groups with a
basis-decohering subgroup, the parameters of the Markov process,
preparations and measurements are underdetermined due to the presence
of a ``gauge'' freedom.  DSRB includes mechanisms to account for the
range of consistent values of such underdetermined parameters,
including fidelities, leakage and seepage.  We demonstrate that even
conservative constraints on SPAM can yield useful information about
the Markov process, and thus about the performance of the underlying
gates.  Future work is required to extend existing analysis of
gate-dependent
noise~\cite{proctor2017randomized,wallman2018randomized,helsen2022generalrb}
in randomized benchmarking to DSRB.

\begin{figure}

  \begin{tikzpicture}[line join=round, line cap=round]
    \draw[-{Stealth[length=3.5mm, width=2.5mm]}, line width=1.2pt] (-1.2, -0.8) -- (-1.2, 5.8);
    \node[anchor=east] at (-1.4, 5.2) {Energy};

    \draw[line width=0.8pt] (2.0, 5.2) -- (5.0, 5.2);
    \node[anchor=south] at (3.5, 5.3) {$|ee\rangle$};

    \node[ellipse, draw, thick, line width=1.5pt, minimum width=7.6cm, minimum height=3.0cm] (subspace) at (3.5, 2.5) {};

    \node[anchor=center] at (3.5, 1.9) {Computational subspace};

    \draw[line width=0.8pt] (1.2, 2.5) -- (3.1, 2.5);
    \node[anchor=south] at (2.15, 2.6) {$\ket{0_{L}}=\ket{eg}$};

    \draw[line width=0.8pt] (3.9, 2.7) -- (5.8, 2.7);
    \node[anchor=south] at (4.85, 2.8) {$\ket{1_{L}}=\ket{ge}$};

    \draw[line width=0.8pt] (2.0, -0.2) -- (5.0, -0.2);
    \node[anchor=south] at (3.5, -0.1) {$|gg\rangle$};

    \tikzset{wavytransition/.style={decorate, decoration={coil, aspect=0, amplitude=1.8pt, segment length=5pt, pre length=2pt, post length=2pt}}}

    
    \draw[wavytransition, -{Stealth[length=2.5mm, width=2mm]}] (2.2, 3.91) -- (3.0, 5.2);
    \node[anchor=east] at (2.4, 4.55) {$l_{e}$};

    \draw[wavytransition, -{Stealth[length=2.5mm, width=2mm]}] (4.0, 5.2) -- (4.8, 3.91);
    \node[anchor=west] at (4.6, 4.55) {$s_{e}$};

    \draw[wavytransition, -{Stealth[length=2.5mm, width=2mm]}] (2.2, 1.09) -- (3.0, -0.2);
    \node[anchor=east] at (2.4, 0.45) {$l_{g}$};

    \draw[wavytransition, -{Stealth[length=2.5mm, width=2mm]}] (4.0, -0.2) -- (4.8, 1.09);
    \node[anchor=west] at (4.6, 0.45) {$s_{g}$};

  \end{tikzpicture}\caption{Schematic of the four levels of two
    transmon qubits. The energy spacings are not to scale. The
    levels involving two or more excitations on a transmon are not
    shown.  The dual-rail qubit is defined by the middle two levels,
    which span the computational subspace.  The leakage subspace is
    spanned by the remaining two levels consisting of the ground
    state \(\ket{gg}\) and the excited state \(\ket{ee}\). The
    arrows show the relevant leakage and seepage processes. After
    twirling and taking into account the near absence of direct
    transitions between the two leaked levels, leakage and seepage
    are parameterized by (1) the leakage probabilities \(l_{g}\) and
    \(l_{e}\), which are the probabilities per randomized
    benchmarking step of leaving the computational subspace to
    \(\ket{gg}\) and \(\ket{ee}\), respectively, and (2) the seepage
    probabilities \(s_{g}\) and \(s_{e}\), which are the
    probabilities per step of returning to the computational
    subspace from the states \(\ket{gg}\) and \(\ket{ee}\),
    respectively. }
  \label{fig:fourlevels}

\end{figure}

Our work is motivated by the specific situation of one qubit with two
leaked states determined by the ``dual-rail'' qubit
defined on the one-excitation subspace of two transmon qubits.  A
schematic level diagram and the computational subspace are shown in
Fig.~\ref{fig:fourlevels}.  For the dual-rail transmon qubit, \(d=2\)
and \(l=2\). We designed sequences of random one-qubit Clifford gates
for inferring the Markov process and SPAM up to parameter
degeneracies.  Instead of the usual analysis of the probabilities of
the correct outcomes for randomized benchmarking, we directly infer
the Markov process up to degeneracies from the observations. We then
analyze the degeneracies after adding box constraints to the
preparation and measurement matrices to constrain SPAM.  The analysis
yields confidence intervals for the parameters of interest. There are
two types of measurements, and ideally, together they identify which
computational or leakage state was present before the measurement. The
box constraints effectively require that the probability of correctly
identifying one of these states is at least \(0.8\), which is a
conservative bound on the actual quality of the measurement.  We also
assume that the prepared state has fidelity at least \(0.8\) with the
intended computational state. The results from applying the fully
developed analysis to previously obtained experimental data show that
although the constraints are conservative, the confidence intervals
become narrow enough to suggest good average performance of the
one-qubit Clifford gates.  We obtained a fidelity conditional on no
leakage in \([\condfidl, \condfidh]\) (\(68\;\%\) confidence interval)
and a total leakage rate in \([\totalleakl,\totalleakh]\). These
intervals do not shrink to \(0\) in the infinite data limit.

There are several alternative methods currently in use for obtaining
total leakage rates and conditional fidelities. These methods can make
use of only one of the two types of measurements in the experiment.
The Wood-Gambetta
method~\cite{woodQuantificationCharacterizationLeakage2018} does not
take into account measurement errors or leakage levels with
distinguishable behaviors. A number of experiments use variations of a
postselected
method~\cite{Nikolov2023,Lu2023,Chou2024,levineDemonstratingLongCoherenceDualRail2024,muniz2025high,radnaev2025universal,zhao2025improved,Tao2026}
that extend leakage randomized benchmarking as described in
Ref.~\cite{hashim2025practical} Sect. VIII.I.1.  The postselected
method analyzes results as an ordinary randomized benchmarking
experiment after postselecting for no leakage. This method assumes
that there is no seepage. It is also possible to apply one of the
methods described by Chen and Baldwin in
Ref.~\cite{ChenRBLeakage2025}.  None of the alternative methods
account for the gauge freedom or can be used to infer other
parameters. They are also not designed to take advantage of multiple
measurement types.  It is nevertheless informative to compare the
leakage and fidelities obtained to the DSRB results that combine both
types of measurements.  We find that for the dual-rail transmon qubit,
in comparison to DSRB, besides not accounting for SPAM and the
associated gauge freedom, the alternative estimates tend to
underestimate the conditional fidelity and the total leakage. The
overall quality of the fits according to the Wood-Gambetta and the
postselected methods is notably worse than that provided by the DSRB
point estimate. There is only one observable for which the fit quality
of the implemented version of the Chen and Baldwin methods can be
compared to DSRB.  This observable is naturally well behaved in the
Markov model, and this Chen and Baldwin method fits it directly. Not
surprisingly, this fit is better than that of DSRB, but no conclusion
can be drawn about the quality of the inferred conditional fidelity
and total leakage. A version of the postselected method was used to
characterize the dual-rail qubit in an earlier experiment described in
Ref.~\cite{zhao2025improved}, which reports an average conditional
fidelity of \(0.9947(5)\). This is substantially less than the
conditional fidelity obtained by DSRB for the experiment reported
here.  It is also less than what we obtained from applying the
postselected method to the latter experiment.

Besides applications to leakage benchmarking, we expect that DSRB
can be used for many benchmarking scenarios involving quantum states
and processes with physical or enforced decoherence and conservative
constraints on state preparations and measurements.  

\vspace*{\baselineskip}
\noindent\textbf{Outline:}
The paper is structured as follows. We begin by introducing the
motivating experimental system, the dual-rail qubit, in
Sect.~\ref{sec:dual-rail_qubit}. We then dive into the theory of DSRB
in Sect.~\ref{sec:decoheredrb}, where we prove the existence of a
classical Markov model for such randomized benchmarking with
gate-independent errors.  This section also provides a
gauge-independent parametrization of the Markov model, eliminating
redundancy in the operational description with state preparation,
transition matrix, and measurement.  Next, in
Sect.~\ref{sec:inference}, we give an outline of maximum likelihood
inference of the Markov model from data. We transition back to the
motivating experimental system in Sect.~\ref{sec:stochasticmodel},
where we define the physical parameters for the Markov model of the
dual-rail qubit and determine the transition matrix eigenvalues in
terms of these parameters. With this model in hand, in
Sect.~\ref{sec:gaugeopt}, we summarize our method for inferring the
range of physical parameters compatible with the maximum likelihood
inference given the constraints on state preparation and
measurement.  In Sect.~\ref{sec:exresults}, we describe the
experiment that was performed and apply our analysis methods to
obtain leakage and fidelity parameters for the experimental
dual-rail qubit. We also compare the results to those that are
obtained by applying the comparison methods mentioned above and
discuss the quality of the fits to data. We conclude with a
discussion in Sect.~\ref{sec:discussion}. The appendix includes a
description of the experimental design used, the details for the
implementation of maximum likelihood inference, and the
determination of the ranges of parameter values compatible with the
gauge of the ML inferred parameters.

\section{Dual-rail transmon qubit}
\label{sec:dual-rail_qubit}

Our work was motivated by the need for characterizing the dual-rail
transmon qubit described in Ref.~\cite{zhao2025improved} Chpt. 5, and
\cite{Jin2025}. Here we give a high-level overview of the features
relevant for characterization. For details of the device and the
control and measurement techniques used, see the references.  The
device consists of two parametrically tunable transmons with the level
structure shown in Fig.~\ref{fig:fourlevels}. Levels with more than
one excitation on a transmon can be neglected for our purposes. The
dual-rail qubit's state space is spanned by the two single-excitation
levels \(\ket{ge}\) and \(\ket{eg}\).  The transmons are tuned so that
the energy difference between the two levels is insensitive to flux
noise.  The qubit states are defined as \(\ket{0_{L}}=\ket{eg}\) and
\(\ket{1_{L}}=\ket{ge}\).  The leaked states are the ground state
\(\ket{gg}\) and the doubly excited state \(\ket{ee}\).

For characterizing the dual-rail qubit, we apply random sequences
consisting of state preparation, a sequence of gates implemented by
control pulses, and measurement. A single run of the experiment
consisting of a preparation, applying a sequence of gates, followed by
measurement is called a trial.  At the beginning of a trial, the two
transmons have thermally relaxed to a state close to the ground state
\(\ket{gg}\). The qubit is prepared in a state close to \(\ket{1_{L}}\)
by applying a \(\pi\)-pulse to the second transmon, which maps
\(\ket{gg}\) to \(\ket{ge}\). 

For the sequence of gates, parametric drives are used to apply \(\pi\)
and \(\pi/2\) gates rotating between \(\ket{0_{L}}\) and
\(\ket{1_{L}}\).  The phase of the parametric drive determines the
phase of the gate. Each sequence of gates consists of \(n\) steps,
where in each step we apply a random Clifford gate by means of the
fixed pattern of two \(\pi/2\) gates described in
Ref.~\cite{kwiatkowskiThesis2024}, Table 4.2. 

Two different measurements implemented in different trials were
available for each sequence. Both measurements are based on the same
dispersive readout that distinguishes the total number of excitations
in the two-transmon system.  The dispersive readout well distinguishes
the ground state \(\ket{gg}\), the computational subspace spanned by
\(\ket{eg}\) and \(\ket{ge}\) and the excited state \(\ket{ee}\).
Dispersive readout yields two real values that are binned
(Ref.~\citenum{zhao2025improved}, Fig. 5.4 (b)) to obtain one of three
possible outcomes corresponding to the distinguishable subspaces.  The
first measurement directly applies the dispersive readout and resolves
the two leaked states, but not the computational subspace.  The second
measurement first applies a \(\pi\) pulse to the first transmon that
exchanges the qubit states \(\ket{0_{L}}\) with \(\ket{gg}\) and
\(\ket{1_{L}}\) with \(\ket{ee}\), then applies the dispersive
readout. This measurement resolves the two computational states but
not the leaked subspace. The two measurements make it possible to
approximately infer the probabilities of the four transmon states at
the end of the sequence.

Leakage and seepage are due to relaxation and thermal excitation of
the transmons. Since transitions between \(\ket{gg}\) and \(\ket{ee}\)
require two excitations, we assume that the probability of such a
transition during a step of the sequence is negligible. The relative
phases of \(\ket{gg}\) and \(\ket{ee}\) are not tracked in the
experiment and are sensitive to flux noise. We therefore assume that
these phases are randomized in each step, leading to complete
decoherence between \(\ket{gg}\), the computational subspace and
\(\ket{ee}\).  This assumption is unlikely to be perfectly satisfied,
and it is best to actively enforce such decoherence whenever possible.

\section{Decohering subgroup randomized benchmarking}
\label{sec:decoheredrb}

The experiment described in the previous section is an instance of
DSRB.  We begin by considering the general scenario where the physical
state space consists of a \(d\)-dimensional computational subspace
\(\cC\) with computational basis \(\{\ket{k_{L}}\}_{k=1}^{d}\) and
\(l\) orthogonal leaked states \(\{\ket{e_{j}}\}_{j=1}^{l}\).  In the
experiment \(d=2\) and \(l=2\). We denote by \(B\) the basis
consisting of the computational states \(\ket{k_{L}}\) and the leaked
states \(\ket{e_{j}}\).  Let \(\Pi_{c}\) be the projector onto the
computational subspace and \(\Pi_{o;j}\) the projectors onto
\(\ket{e_{j}}\).  For each trial of the experiment, we apply a random
sequence of noisy gates to the prepared state. We assume that the
noise is gate independent so that there exist two noise operations
\(\Lambda_{b}\) and \(\Lambda_{e}\) such that the physical operation
implementing the intended gate's quantum operation \(g\) acts as the
composition of quantum channels
\(\tilde g = \Lambda_{e} g \Lambda_{b}\).  The intended gate preserves
the leaked levels. For the present purposes, we assume that each gate
phase-decoheres the leaked levels and absorb this decoherence into the
intended gate operation.  For the experimental dual-rail qubit, the
decoherence is justified by the lack of phase tracking for the leaked
levels and their sensitivity to flux noise. In general, decoherence
should be actively enforced whenever possible. One way to do so is to
ensure that averaging over the group of intended gates includes such
decoherence. More generally, the theory of DSRB implies that it
suffices for the group of intended gates to include a subgroup that
phase decoheres the members of \(B\).

The assumption on decoherence implies that the intended gates are not
unitary and satisfy
\begin{align}
  g(\rho) &= \Pi_{c}g(\Pi_{c}\rho \Pi_{c})\Pi_{c} +
            \sum_{j=1}^{l} \Pi_{o;j}\rho\Pi_{o;j}.
            \label{eq:intended_gate}
\end{align}
In principle, it is possible to obtain such an action with a starting set of unitary
gates by randomizing the phases of the leaked levels before and after the gate.

For the set of intended gates we choose a subspace representation of a
group \(\cG\) whose members act as quantum channels on operators of
the physical state space.  In particular, the group's identity is
mapped to a projection with respect to the Hilbert-Schmidt inner
product on operators. Twirling with respect to any subgroup of the
physical unitary group yields examples of such projections.  The
intended gate corresponding to \(h\in\cG\) is denoted by the quantum
operation \(g[h]\), where \(g[h]\) satisfies
Eq.~\eqref{eq:intended_gate}, acts unitarily on the computational
subspace, and for all \(h_{1},h_{2}\in\cG\),
\(g[h_{1}h_{2}] = g[h_{1}] g[h_{2}]\). Let \(\cS\) be the subgroup of
\(\cG\) that consists of the members \(h\) of $\cG$ such that \(g[h]\)
fixes each computational basis element. We assume that uniformly
randomly applying \(g[h]\) over \(h\in\cS\) phase decoheres basis
\(B\), and in particular, the computational basis. This is the
defining property of the group used for DSRB. We do not require that
\(g(\cG)\) is a \(2\)- or even a \(1\)-design on the
computational subspace.

Each step of the random sequences except for the last consists of
choosing an independent, uniformly random member \(H\) of \(\cG\) and
applying \(\tilde g[H] =\Lambda_{e}g[H]\Lambda_{b}\). We use
juxtaposition to denote composition of quantum operations, that is, we
do not indicate composition with an explicit binary operator.  We use
capitalization conventions to denote random variables (RVs) such as
\(H\). Instances of random variables are denoted by the corresponding
lower case letters.  For example, an instance or specific value of
\(H\) is denoted by \(h\).

Let \(H_{k}\) be the RV for the random member of \(\cG\) chosen at
step \(k\).  Let \(\bm{H}=(H_{j})_{j=1}^{n}\) be the chosen sequence,
where \(n\) is the sequence length. Let \(T_{k}=H_{k}\ldots H_{1}\) be
the product of the first \(k\) chosen gates.  In the last step,
\(H_{n}\) is chosen uniformly randomly subject to the constraint that
\(T_{n}\) is in a given coset \(h\cS\) of \(\cS\), where we require
that \(g[h]\) permute the computational basis.  Thus \(H_{n}\) depends
on \(T_{n-1}\).  Define \(\bm{H}_{\leq k}=(H_{j})_{j=1}^{k}\) to be the
sequence consisting of the first \(k\) chosen members.  The noisy
implementation of \(\bm{H}\) is defined recursively by
\(\tilde g[\bm{H}_{\leq 0}] =\tilde g[()] = \one\) and
\(\tilde g[\bm{H}_{\leq k+1}] = \tilde g[H_{k+1}]\tilde g[\bm{H}_{\leq
  k}]\).

To analyze the random sequences, we express \(H_{k}\) in terms of a
product of auxiliary RVs \(H'_{k}, S_{k},S'_{k}\) that are not
observed.  Our construction ensures that the RVs
\(H'_{k},S_{k},S'_{k}\) for \(k\in [1,n]\) are independent.  First, we
let \(S_{k}, S'_{k}\) be fresh independent RVs that are uniformly
random on \(\cS\).  The RVs \(H'_{k+1}\) are determined by
\(H_{k+1}\), \(S_{k+1}\), \(S'_{k+1}\) and \(H'_{k}\) according to
\(H_{1}=H'_{1}S_{1}\) and for \(k\geq 1\),
\(H_{k+1}=H'_{k+1}S'_{k+1}S_{k+1}{H'_{k}}^{-1}\).  This implies that
\(H'_{k+1}\) is independent of \(S_{k+1},S'_{k+1}\) and the
\(H'_{j},S_{j},S'_{j}\) for \(j\in [1,k]\). To see this, for
\(1<k+1<n\) write
\(H'_{k+1}= H_{k+1}H'_{k}{S_{k+1}}^{-1}{S'_{k+1}}^{-1}\).  Because
\(H_{k+1}\), \(S_{k+1}\) and \(S'_{k+1}\) are independent and
independent of variables with indices less than \(k+1\), and by
invariance of the uniform distribution (the Haar measure) on \(\cG\),
the probability distribution of \(H'_{k+1}\) is uniform conditional on
\(H'_{k}\), \(S_{k+1}\) and \(S'_{k+1}\). This implies the claimed
independence for \(k+1<n\).  For \(k+1=n\), the condition that
\(T_{n}\in h\cS\) is equivalent to \(H'_{n}\in h\cS\). In general, we
have \(T_{k}\cS=H'_{k}\cS\). This follows by induction: We have
\(T_{1}=H_{1}=H'_{1}S_{1}\), so \(T_{1}\cS=H'_{1}\cS\).  Given that
\(T_{k}\cS=H'_{k}\cS\) and from
\(H_{k+1}=H'_{k+1}S'_{k+1}S_{k+1}{H'_{k}}^{-1}\), it follows that
\(T_{k+1}\cS=H_{k+1}T_{k}\cS=H_{k+1}H'_{k}\cS=H'_{k+1}\cS\). Therefore,
given \(H'_{n-1}\) as well as \(S_{n}, S'_{n}\) and all other RVs with
indices less than \(n\), choosing \(H_{n}\) uniformly so that
\(T_{n}\in h\cS\) is equivalent to choosing \(H_{n}\) uniformly in the
set \(h\cS {H'_{n-1}}^{-1}\). This is in turn equivalent to choosing
\(H'_{n}\) uniformly in the set \(h\cS\), from which it follows that
\(H'_{n}\) also satisfies the claimed independence.

The following lemma generalizes the results for randomized
benchmarking with gate-independent noise to DSRB. We denote the
probability and conditional probability functionals by
\(\Prob(\cdot)\) and \(\Prob(\cdot|\cdot)\), and the expectation and
conditional expectation functionals by \(\Exp(\cdot)\) and
\(\Exp(\cdot|\cdot)\). 

\begin{lemma}\label{lem:twirling}
  Let \(\tilde\Lambda\) be the quantum operation obtained by
  twirling \(\Lambda_{b}\Lambda_{e}\) with \(\cG\) and decoherence
  in the computational basis according to
  \(\tilde\Lambda=\Exp\qty(g[S'H]\Lambda_{b}\Lambda_{e}g[H^{-1}S])\), where
  \(H\), \(S\) and \(S'\) are independent with \(H\) uniformly
  random on \(\cG\), and \(S\) and \(S'\) uniformly random on
  \(\cS\).  Then for \(k\geq 1\) we have
  \begin{align}
    \Exp\qty(\tilde g[\bm{H}_{\leq k}]|H'_{k})
    &= \Lambda_{e}g[H'_{k}]\Exp\qty(g[S'_{k}])\tilde \Lambda^{k-1}\Lambda_{b}.
  \end{align}
\end{lemma}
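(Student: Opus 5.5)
The plan is to prove the identity by induction on \(k\), using the decomposition \(H_{1}=H'_{1}S_{1}\), \(H_{k+1}=H'_{k+1}S'_{k+1}S_{k+1}{H'_{k}}^{-1}\) and the mutual independence of all the \(H'_{j},S_{j},S'_{j}\) established above. Since \(g\) is a representation, \(g[H_{k+1}]=g[H'_{k+1}]g[S'_{k+1}]g[S_{k+1}]g[{H'_{k}}^{-1}]\). I will also use two facts about \(\cS\). First, \(\Exp(g[S])\) is an idempotent (it is the twirl projection onto the decohering subgroup). Second, by Haar invariance on \(\cS\), \(S'S\) is uniform on \(\cS\) and independent of each of \(S\) and \(S'\) separately. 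The one property \(g\) cannot supply is that it is invertible on all operators: \(g[e]\) is only a projection. So I will write \(g[h^{-1}]\) rather than \(g[h]^{-1}\), and rely on \(g[h]g[h^{-1}]=g[e]\) together with \(g[e]g[h]=g[h]\).

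\textbf{Base case \(k=1\).} We have \(\tilde g[\bm{H}_{\leq 1}]=\Lambda_{e}g[H'_{1}]g[S_{1}]\Lambda_{b}\). Conditioning on \(H'_{1}\) and averaging over the independent \(S_{1}\) gives \(\Lambda_{e}g[H'_{1}]\Exp(g[S_{1}])\Lambda_{b}\). Because \(S_{1}\) and \(S'_{1}\) have the same distribution, this equals the claimed expression with \(\tilde\Lambda^{0}=\one\).

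\textbf{Inductive step.} Write
\[
\tilde g[\bm{H}_{\leq k+1}]=\Lambda_{e}\, g[H'_{k+1}]\, g[S'_{k+1}]\, g[S_{k+1}]\, g[{H'_{k}}^{-1}]\,\Lambda_{b}\,\tilde g[\bm{H}_{\leq k}].
\]
Conditioned on \(H'_{k+1}\), the variables \(S_{k+1},S'_{k+1}\) are independent of all variables with index \(\le k\). By the tower property, I first take the conditional expectation given \(H'_{k}\) (together with \(H'_{k+1},S_{k+1},S'_{k+1}\)) and insert the induction hypothesis. This gives
\[
\Exp\Bigl(\Lambda_{e}g[H'_{k+1}]g[S'_{k+1}]g[S_{k+1}]\,g[{H'_{k}}^{-1}]\Lambda_{b}\Lambda_{e}g[H'_{k}]\,\Exp(g[S'_{k}])\,\tilde\Lambda^{k-1}\Lambda_{b}\,\Big|\,H'_{k+1}\Bigr).
\]
Here \(H'_{k}\) is uniform on \(\cG\) and \(S_{k+1},S'_{k}\) are uniform on \(\cS\), all independent of each other and of \(H'_{k+1}\).

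The main obstacle is regrouping this average into \(\Exp(g[S'_{k+1}])\,\tilde\Lambda\). Averaging over \(S_{k+1},H'_{k},S'_{k}\) with the explicit \(\Exp(g[S'_{k+1}])\) pulled out in front produces the middle factor \(\Exp\bigl(g[S]g[H^{-1}]\Lambda_{b}\Lambda_{e}g[H]g[S']\bigr)\). Under the substitution \(H\mapsto H^{-1}\), which preserves the uniform distribution on \(\cG\), this becomes \(\Exp(g[SH]\Lambda_{b}\Lambda_{e}g[H^{-1}S'])\). That is \(\tilde\Lambda\) with the roles of \(S\) and \(S'\) swapped, and since they are i.i.d. it equals \(\tilde\Lambda\).

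The remaining factor \(\Exp(g[S'_{k+1}])\) in front is handled by idempotence. Multiplying it against \(\tilde\Lambda\) absorbs it, because \(\Exp(g[S'])\,g[S'']\) averages to \(\Exp(g[S'])\) by Haar invariance on \(\cS\). Equivalently, \(\Exp(g[S'_{k+1}])\tilde\Lambda=\tilde\Lambda\), yet the statement keeps the factor explicitly, so the bookkeeping must produce exactly one copy of \(\Exp(g[S'])\) adjacent to \(g[H'_{k+1}]\), which idempotence provides.

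I also have to confirm that the \(g[e]\) factors arising at the boundaries are harmless. Every gate \(g[h]\) satisfies \(g[h]=g[e]g[h]=g[h]g[e]\), and \(\Lambda_{b}\) only ever appears sandwiched between gates or at the right end, where it matches the claimed form. Putting these steps together yields \(\Lambda_{e}g[H'_{k+1}]\Exp(g[S'_{k+1}])\tilde\Lambda^{k}\Lambda_{b}\), which completes the induction.
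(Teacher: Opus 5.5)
Your proof is correct and follows the paper's proof essentially step for step. You induct on \(k\), expand \(g[H_{k+1}]=g[H'_{k+1}]g[S'_{k+1}]g[S_{k+1}]g[{H'_{k}}^{-1}]\), insert the induction hypothesis using independence, and identify the average over \(S_{k+1},H'_{k},S'_{k}\) with \(\tilde\Lambda\) via \(H\mapsto H^{-1}\) and the interchangeable roles of \(S\) and \(S'\). In the base case, your remark that \(S_{1}\) and \(S'_{1}\) are identically distributed also quietly repairs the paper's writing \(g[H'_{1}S'_{1}]\) where the construction has \(H_{1}=H'_{1}S_{1}\).

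The idempotence and \(g[e]\) remarks are unnecessary. Exactly one factor \(\Exp(g[S'_{k+1}])\) comes directly out of the expansion. Also, no product \(g[h]g[h^{-1}]\) ever needs to be collapsed, since \(\Lambda_{b}\Lambda_{e}\) always sits between the two factors.
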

\begin{proof}
  We have by definition and construction
  \begin{align}
    \tilde g[\bm{H}_{\le 1}] = \Lambda_{e}g[H'_{1}S'_{1}]\Lambda_{b}
    =\Lambda_{e}g[H'_{1}]g[S'_{1}]\tilde\Lambda^{0}\Lambda_{b}
  \end{align}
  which implies the conclusion for \(k=1\).  We proceed by induction.
  Consider \(k\in [1,n-1]\), and assume that
  \(\Exp\qty(\tilde g[\bm{H}_{\leq k}]|H'_{k}) =
  \Lambda_{e}g[H'_{k}]\Exp\qty(g[S'_{k}])\tilde\Lambda^{k-1}\Lambda_{b}\).
  Applying the independence properties of the RVs gives 
  \begin{align}
    \Exp\qty(\tilde g[\bm{H}_{\leq k+1}]\Big| H'_{k+1},H'_{k})\hspace*{-1in}
    &
      \notag\\
    &=\Exp\qty(\Lambda_{e}g[H'_{k+1}S'_{k+1}S_{k+1}(H'_{k})^{-1}]\Lambda_{b}
      \tilde g[\bm H_{\leq k}]\Big|H'_{k+1},H'_{k})
      \notag\\
    &=\Exp\qty(\Lambda_{e}g[H'_{k+1}]g[S'_{k+1}]g[S_{k+1}]g[(H'_{k})^{-1}]
      \Lambda_{b}
      \tilde g[\bm H_{\leq k}]\Big|H'_{k+1},H'_{k})
      \notag\\      
    &=\Lambda_{e}g[H'_{k+1}]\Exp\qty(g[S'_{k+1}])\Exp\qty(g[S_{k+1}])
      g[(H'_{k})^{-1}]
      \Lambda_{b}
      \Exp\qty(\tilde g[\bm H_{\leq k}]\Big|H'_{k})
      \notag\\      
    &=\Lambda_{e}g[H'_{k+1}]\Exp\qty(g[S'_{k+1}])\Exp\qty(g[S_{k+1}])
      g[(H'_{k})^{-1}]\Lambda_{b}\Lambda_{e}
      g[H'_{k}]\Exp\qty(g[S'_{k}])\tilde\Lambda^{k-1}\Lambda_{b}.
  \end{align}
  Taking the expectation over \(H'_{k}\), which is independent of
  \(H'_{k+1}\), gives
  \begin{align}
    \Exp\qty(\tilde g[\bm{H}_{\leq k+1}]\Big| H'_{k+1})\hspace*{-1in}
    &
      \notag\\
    &=\Lambda_{e}g[H'_{k+1}]\Exp\qty(g[S'_{k+1}])\Exp\qty(\Exp\qty(g[S_{k+1}])
      g[(H'_{k})^{-1}]\Lambda_{b}\Lambda_{e}
      g[H'_{k}]\Exp\qty(g[S'_{k}]))\tilde\Lambda^{k-1}\Lambda_{b}
      \notag\\
    &=\Lambda_{e}g[H'_{k+1}]\Exp\qty(g[S'_{k+1}])\Exp\qty(g[S_{k+1}
      (H'_{k})^{-1}]\Lambda_{b}\Lambda_{e}
      g[H'_{k}S'_{k}])\tilde\Lambda^{k-1}\Lambda_{b}
      \notag\\
    &=\Lambda_{e}g[H'_{k+1}]\Exp\qty(g[S'_{k+1}])
      \tilde\Lambda^{k}\Lambda_{b},
  \end{align}
  which implies the conclusion of the lemma for \(k+1\), as required
  to prove the lemma by induction.
\end{proof}

\begin{lemma}
  The twirled noise process \(\tilde\Lambda\) decoheres its input
  and output in the computational and leaked basis \(B\), and
  \(\tilde\Lambda\) acts as a Markov process on \(B\).
\end{lemma}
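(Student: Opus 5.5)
The plan is to factor \(\tilde\Lambda\) so that a decohering operator appears explicitly on both its input and output sides. The natural candidate is \(D \defeq \Exp\qty(g[S])\), with \(S\) uniform on \(\cS\). By the defining assumption of DSRB, \(D\) is the phase-decohering map in the basis \(B\):
\begin{align}
  D(\rho)=\sum_{b\in B}\ketbra{b}{b}\rho\ketbra{b}{b}.
\end{align}
Because \(H\), \(S\) and \(S'\) are independent and \(g\) is a representation, the definition of \(\tilde\Lambda\) factors as
\begin{align}
  \tilde\Lambda=\Exp\qty(g[S'])\,\Exp\qty(g[H]\Lambda_{b}\Lambda_{e}g[H^{-1}])\,\Exp\qty(g[S]) = D\,\Lambda'\,D,
\end{align}
where \(\Lambda'\) denotes the \(\cG\)-twirl of \(\Lambda_{b}\Lambda_{e}\). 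This gives \(\tilde\Lambda = D\tilde\Lambda = \tilde\Lambda D\), using \(D^{2}=D\). The idempotence follows either from \(D\) being a pinching map, or from the group average \(\Exp(g[S])\) being idempotent by invariance of the uniform measure on \(\cS\). This establishes the first claim: \(\tilde\Lambda\) decoheres its input and output in \(B\).

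For the Markov claim, I will define \(M_{b'b}\defeq \tr\qty(\ketbra{b'}{b'}\tilde\Lambda(\ketbra{b}{b}))\) for \(b,b'\in B\). For any input \(\rho\), decoherence on the input side gives \(\tilde\Lambda(\rho)=\sum_{b}\bra{b}\rho\ket{b}\,\tilde\Lambda(\ketbra{b}{b})\). Decoherence on the output side then gives \(\tilde\Lambda(\ketbra{b}{b})=\sum_{b'}M_{b'b}\ketbra{b'}{b'}\). Hence \(\tilde\Lambda\) maps diagonal states to diagonal states, with the diagonal transformed by the matrix \(M\), and annihilates all off-diagonal components.

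Next I will check that \(M\) is column-substochastic. Nonnegativity follows from complete positivity of \(\tilde\Lambda\), since the entries are diagonal matrix elements of the positive operator \(\tilde\Lambda(\ketbra{b}{b})\). The column sums are \(\sum_{b'}M_{b'b}=\tr\tilde\Lambda(\ketbra{b}{b})\). These are at most \(1\) because each constituent map is trace non-increasing, and they equal \(1\) if \(\Lambda_{b},\Lambda_{e}\) are trace preserving. Composing, \(\tilde\Lambda^{k}\) acts on diagonals as \(M^{k}\), which is the Markov process on \(B\).

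The main obstacle is justifying that \(\Exp(g[S])\) is exactly the decoherence map on all of \(B\), including the leaked levels, rather than only on the computational subspace. This can be handled from the stated assumption that the uniform average over \(\cS\) phase-decoheres \(B\), together with Eq.~\eqref{eq:intended_gate}. That equation already decoheres the leaked levels from each other and from \(\cC\). Elements of \(\cS\) fix each computational basis element, so the only remaining task is to kill the coherences within \(\cC\), and this is exactly what the assumption supplies. A minor subtlety is that \(g\) maps the identity of \(\cG\) to a projection rather than to \(\one\). The factorization \(D\Lambda'D\) uses only the homomorphism property, so this does not affect the argument.
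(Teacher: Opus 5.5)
Your proof is correct and follows essentially the same approach as the paper. The paper likewise notes that \(\tilde\Lambda\) is sandwiched between independent uniform averages over \(\cS\), which decohere \(B\) by Eq.~\eqref{eq:intended_gate} and the DSRB assumption, so that \(\tilde\Lambda\) maps basis states to mixtures of basis states. Your explicit factorization \(\tilde\Lambda = D\Lambda' D\) with \(D=\Exp(g[S])\) idempotent, and your check that the resulting transition matrix has nonnegative entries and unit column sums, fill in details the paper leaves implicit.
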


\begin{proof}
  \(\tilde\Lambda\) includes final pre- and post-processing
  with independent, uniformly random ideally realized members of
  \(\cS\).  Since \(g[h]\) decoheres the leaked levels relative to the
  computational subspace, and by assumption, applying \(g[S]\) for
  uniformly random members of \(\cS\) decoheres the computational
  basis, the claimed decoherence properties follow. This implies that
  \(\tilde\Lambda\) first decoheres its input to mixtures
  of members of \(B\), then transforms members of \(B\) to
  mixtures of members of \(B\) at the output. Therefore \(\tilde\Lambda\)'s
  action is a Markov process on \(B\).
\end{proof}

Let \(\rho\) be a prepared density-matrix state. The average over random sequences
of the state at the end of the 
sequence, conditional on \(H'_{n}\), is given by
\begin{align}
  \Exp\qty(\tilde g[\bm{H}]|H'_{n})(\rho)
  &= \Lambda_{e}g[H'_{n}]\Exp\qty(g[S'_{n}])\tilde \Lambda^{n-1}\Lambda_{b}(\rho).
\end{align}
Because \(\Exp\qty(g[S'_{n}])\) and \(\tilde\Lambda\) decohere their
inputs, only the diagonal elements of \(\Lambda_{b}(\rho)\) in \(B\) matter. As a
result, we can assume that state preparation consists of random
preparation of one of the members of \(B\), where we identify the
prepared state with the state \(\tilde\rho\) obtained by decohering
\(\Lambda_{b}(\rho)\), that is \(\tilde\rho=\Exp(g[S])\Lambda_{b}(\rho)\)
for independent \(S\) uniform on \(\cS\). Since \(\Exp\qty(g[S'_{n}])\)
applied to an already decohered state has no effect,  we have
\begin{align}
  \Exp\qty(\tilde g[\bm{H}]|H'_{n})(\rho)
  &= \Lambda_{e}g[H'_{n}]\tilde \Lambda^{n-1}(\tilde\rho).
\end{align}

To show that the measurement is modeled by a classical noisy
measurement, we use the requirement that \(g[h]\) permute the
computational basis, where \(h\cS\) is the coset of \(T_{n}\).  If
\(P\) is a measurement outcome operator of a positive operator valued
measure, then the outcome probability of \(P\) is given by
\begin{align}
  \Prob(P|\rho)
  &=
    \tr(P\cdot \Lambda_{e}\Exp(g[H'_{n}])\tilde \Lambda^{n-1}(\tilde\rho))
    \notag\\
  &=
    \tr(\Lambda_{e}^{*}(P)\cdot g[h]\Exp(g[S])\tilde \Lambda^{n-1}(\tilde\rho))
    \notag\\
  &=
    \tr(g[h]^{*}(\Lambda_{e}^{*}(P))\cdot
    \tilde\Lambda^{n-1}(\tilde\rho)),
\end{align}
where \(\Sigma^{*}\) denotes the adjoint of quantum operation
\(\Sigma\) and \(S\) is an independent RV that is uniform on \(\cS\).
We used the symbol ``\(\cdot\)'' to explicitly indicate operator
multiplication for clarity.  Since \(g[h]\) permutes the density
matrices associated with members of basis \(B\), for $\ket{i}\in B$ and distinct basis
elements \(\ket{j},\ket{k}\) in \(B\) we have
\(g[h](\ketbra{i})=\ketbra{\sigma(i)}\) for some permutation
\(\sigma\) and
\begin{align}
  \tr(g[h]^{*}(\ketbra{j})\cdot \ketbra{i})
  &= \tr(\ketbra{j}\cdot g[h](\ketbra{i})) =
    \tr(\ketbra{j}\cdot \ketbra{\sigma(i)}) =\delta_{j,\sigma(i)},
    \notag\\
  \tr(g[h]^{*}(\ketbra{j}{k})\cdot \ketbra{i})
  &= \tr(\ketbra{j}{k}\cdot g[h](\ketbra{i})) =
    \tr(\ketbra{j}{k}\cdot \ketbra{\sigma(i)}) = 0.
\end{align}
It follows that \(g[h]^{*}\) permutes the diagonal elements in basis
\(B\) of its argument while creating no overlap with diagonal elements
when applied to the off-diagonal elements. Since
\(\tilde\Lambda^{n-1}(\tilde\rho)\) is diagonal, only diagonal
elements of \(\Lambda_{e}^{*}(P)\) contribute to
\(\Prob(P|\rho)\). The measurement outcome can therefore be
treated as the result of a classical noisy measurement of the output
of the Markov chain applied to a classical, noisy input state, where a
known permutation induced by \(h\) may be applied before the
measurement.

The analysis
above implies that we can model the results from trials of the
experiment by the Markov model consisting of:
\begin{enuma}
\item A \((d+l)\times(d+l)\)-dimensional Markov transition matrix \(Q\).
\item A family of state preparations \(\{\pi_{j}\}_{j=1}^{p}\), where the
  \(\pi_{j}\) are \(d+l\)-dimensional probability column vectors.
\item A family of measurement transition matrices
  \(\{M_{r}\}_{r=1}^{m}\), where the rows of \(M_{r}\) correspond
    to POVM operators \(P\) of the \(r\)'th measurement outcomes. The
    row for POVM operator \(P\) contains the \(d+l\) diagonal entries of
    \(\Lambda_{e}^{*}(P)\).
\end{enuma}
The normalization of POVMs implies that the columns of the \(M_{r}\)
are probability vectors that can be interpreted as the distribution of
outcomes given a particular basis element as the measurement input
state.  We standardize the number of outcomes so that each \(M_{r}\)
has the same number of rows.  An experiment consists of a number of
trials, where each trial involves choosing a state preparation
\(\pi_{j}\), a measurement \(M_{r}\), a computational-basis permuting
\(h\in\cG\), and the number of steps in the random sequences to be
run. The probability distribution from which the observation of the
trial is sampled from is given by the column vector
\begin{align}
  p_{j,r,h,n}&= M_{r}\hat g[h] Q^{n}\pi_{j},
               \label{eq:prob_j_r_h_n}
\end{align}
where \(\hat g[h]\) is the matrix of the permutation of the basis
elements implemented by \(g[h]\).

The problem with SPAM is that if neither the \(M_{r}\) nor the
\(\pi_{j}\) are known, \(Q\) can only be inferred up to a ``gauge''.
Consider any \((d+l)\times(d+l)\)-dimensional invertible matrix \(G\) such that the
\(M'_{r}=M_{r}G^{-1}\) are stochastic, the \(\pi'_{j}=G\pi_{j}\) are
probability distributions and \( \hat g[h]G=G\hat g[h]\) for all
permutations \(\hat g[h]\) used in the experiment. Then we can set
\(Q'=GQG^{-1}\) and satisfy
\begin{align}
  p_{j,r,h,n}&= M'_{r}\hat g[h] Q'^{n}\pi'_{j}.
\end{align}
Any \(G\) that permutes the leaked levels and acts as the identity on the computational
subspace satisfies the given conditions. Not surprisingly, in the absence
of additional assumptions, it is not possible to identify which leaked level is which.

Gauge-free parameters can be obtained as follows.  Let \(\bm{\pi}\) be
the matrix whose columns are the \(\pi_{j}\) and \(\bm{M}\) the matrix
obtained by placing the matrices \(M_{r}\hat g[h]\) one above the
other in a column for the different \(r\) and permutations
\(\hat g[h]\) used.  Then the observable probabilities are given by the
matrices
\begin{align}
  \bm{p}_{n} = \bm{M} Q^{n}\bm{\pi}.
  \label{eq:p_n def}
\end{align}
Assume that the \(\bm{p}_{n}\) are given. Since the dimension of \(Q\)
is \(d+l\), the joint column span \(O\) of the \(\bm{p}_{n}\) is at
most \(d+l\).  We assume that \(O\) is \(d+l\)-dimensional. This
requires that the columns of \(\bm{M}\) are independent, which is the
case if the measurements are complete for inferring density matrices
that are diagonal in \(B\).  Additional conditions need to be
satisfied by the prepared states \(\pi_{j}\) and \(Q\).  Sufficient
such conditions are that the state preparations are spanning, or that
\(Q\) is cyclic on at least one prepared state, meaning that the set
of vectors \(\{Q^{n}\pi_{j}\}_{n\geq 1}\) is spanning for some
\(\pi_{j}\). The column span of the \(\bm{p}_{n}\) can be determined
from the observable probabilities, so we can choose a matrix \(U\)
with \((d+l)\) rows such that for all \(n\),
\(U^{T}U\bm{p}_{n}=\bm{p}_{n}\).  One such matrix is
\((\bm{M}^{T}\bm{M})^{-1/2}\bm{M}^{T}\), but this matrix is not
inferrable. The conditions on \(U\) imply that the column span of
\(U^{T}\) is \(O\). Furthermore, \(UU^{T}\) is the identity. This is
because the identities \(U^{T}U\bm{p}_{n}=\bm{p}_{n}\) imply that for
all \(v\in O\), \(U^{T}U v = v\). In particular \(U^{T}UU^{T}=U^{T}\),
which implies \(UU^{T}=\one\) since \(U^{T}\) has full rank.  These
observations imply that the columns of \(U^{T}\) are an orthonormal
basis of \(O\), so \(U\) is determined up to an orthogonal
transformation on the left.

Let \(Y=(U\bm{M})Q(U\bm{M})^{-1}\) and \(A=(U\bm{M})\bm{\pi}\).
Then
\begin{align}
  U\bm{p}_{n} = Y^{n}A.
  \label{eq:gaugefree}
\end{align}
We claim that \(Y\) and \(A\) are determined by \(U\) and the
\(\bm{p}_{n}\).  Consider a set of indices \(J\) such that the matrix
\(R\) obtained by concatenating the matrices \(U\bm{p}_{n}\) for
\(n\in J\) in a row is full rank. Let \(R'\) be the matrix obtained by
concatenating the \(U\bm{p}_{n+1}\) for \(n\in J\) in a row, with the
same ordering of the concatenated matrices. Then \(R'=YR\), from which
\(Y\) can be determined by applying the right-inverse of \(R\) to both
sides. Once \(Y\) is determined, \(A\) can be determined as
\(Y^{-n}U\bm{p}_{n}\), which does not depend on \(n\geq 1\). 

As noted, \(U\) is determined up to an orthogonal transformation on the
left, in particular \(U=V(\bm{M}^{T}\bm{M})^{-1/2}\bm{M}^{T}\) for some orthogonal
matrix \(V\). Therefore
\begin{align}
  Y = V(\bm{M}^{T}\bm{M})^{1/2}Q(\bm{M}^{T}\bm{M})^{-1/2}V^{T}
\end{align}
for some orthogonal matrix \(V\). This is of the form of a general similarity
transformation \(GQG^{-1}\) of \(Q\). Thus the set of transition
matrices \(Q'\) consistent with the observations is a subset of the
set of conjugates \(G^{-1}YG\) of \(Y\).  Without taking advantage of
constraints on \(\bm{M}\) and \(\bm{\pi}\) and the fact that \(Q\) is a
Markov transition matrix, we can only infer the Jordan decomposition
of \(Q\) from that of \(Y\).

\section{Maximum likelihood inference}
\label{sec:inference}

Here we outline the general strategy for maximum likelihood inference
of model parameters satisfying constraints such as those considered
here.  A detailed description of the strategy used for analyzing the
dual-rail qubit experiment is given in App.~\ref{app:em}.

The analysis of the gauge-free parameters in the previous section
assumes that all measurement probabilities are available at each
sequence length.  In an experiment, only some sequence lengths are
implemented, and not all measurements need be performed at all
implemented sequence lengths. We assume that the implemented
measurements and sequence lengths are sufficient to statistically
infer a transition matrix, preparations and measurements from which
the gauge-free parameters can be determined.  By checking the rank of
the relevant Jacobian as explained in App.~\ref{sec:coptimaldesign},
we verified that for our experimental design, inferrability holds
locally at a reference point in the absence of the additional
constraints on the measurement matrices of Eqs.~\eqref{eq:mr_twirl}
and~\eqref{eq:mr_twirl_1}.

An experiment consists of \(N\) trials.  The \(k\)'th trial of
the experiment consists of preparing the state \(\pi_{s(k)}\), applying a
random sequence of \(n_{k}\) steps and performing the measurement
\(M_{m(k)}\) to obtain the measurement result \(r(k)\).
Let \(\bm{s}=(s(1),\ldots,s(N))\), \(\bm{m}=(m(1),\ldots, m(N))\)
and \(\bm{r}=(r(1),\ldots,r(N))\).

In the first part of the analysis, we use maximum likelihood (ML) to
infer a combination of transition matrix \(Q\), preparations
\(\bm{\pi}\) and measurements \(\bm{M}\) that maximize the probability
of the observed outcomes. We write \(\phi=(\bm{\pi},\bm{M},Q)\) for
the parameters of the model. The ML analysis is designed to respect
additional constraints on \(Q\), \(\bm{\pi}\) and \(\bm{M}\).  Because
of the gauge redundancy, ML solutions \(\phi_{\ML}\) are arbitrary
within a set of gauge equivalent solutions.

A general strategy for finding an ML solution \(\phi_{\ML}\) for
\(\phi\) given the outcomes of the experiment is to use expectation
maximization with techniques developed for hidden Markov models
adapted to the information available in our experiments. See
Ref.~\cite{ephraim2002hidden} for a review of hidden Markov models and
their inference. Expectation maximization for inferring model
parameters can be implemented by an iteration, where in each iteration
an objective function is maximized by means of the Baum algorithm
described in section VI.C of the reference. Hidden Markov models
involve a Markov process applied to an initial state and observed at
each step with a stochastic process. In standard hidden Markov models,
the state at each step is observed with the same, known observation
process, and the goal is to infer the initial state and the transition
matrix.  In our model, there is only one observation at the end of the
sequence, and the observation process is also unknown. The Baum
algorithm can nevertheless be used with a few modifications. The first
modification is to drop the observations at intermediate steps. The
second is to include the parameters for the observation at the last
step in the maximization. The final modification is to include
multiple types of state preparations and observations. The key insight
of the Baum algorithm is that the objective function that is maximized
in each iteration of expectation maximization is an efficiently
computable linear combination of logarithms of affine functions of the
entries of the transition, state preparation and measurement matrices,
where the coefficients of the logarithms are non-negative.  The
transition matrix in Eq.~\eqref{eq:transitionmatrix} below can be
reparameterized so that the entries are linear in the constrained
parameters, with specific linear constraints.  For our application,
the maximization step reduces to a box-constrained simplex problem,
which is an instance of nonlinear resource allocation problems, for
which there are efficient algorithms. See App.~\ref{app:em} for the
reparametrization and the description of the box-constrained simplex
problem and its solution.

In the second part of the analysis we determine for each parameter of
interest an interval that contains the set of the parameter's values
that are achievable within the gauge equivalence class with
conservative constraints on preparation and measurement quality. The
second part of the analysis is outlined in Sect.~\ref{sec:gaugeopt}
after instantiating the Markov model for the dual-rail qubit. It is
detailed in App.~\ref{app:gauge search}.

\section{Markov model for the dual-rail qubit}
\label{sec:stochasticmodel}

For the dual-rail qubit, \(g[\cG]\) restricted to the
computational subspace is the Clifford group. Because this group is a
\(2\)-design, the noise process \(\Lambda_{b}\Lambda_{e}\) is twirled
to a depolarizing channel on the computational subspace. As a result,
the classical Markov process commutes with exchange of the two
computational states. For the classical process, we label the two
computational states as \(0\) and \(1\), representing
\(\ket{0_{L}}\) and \(\ket{1_{L}}\), and the two
leaked levels as \(g\) and \(e\), representing the ground state
\(\ket{gg}\) and the doubly excited state \(\ket{ee}\) in
Fig.~\ref{fig:fourlevels}. Because direct transitions between \(\ket{ee}\)
and \(\ket{gg}\) are negligible, the Markov transition matrix \(Q\) can
be parametrized as
\begin{align}
  Q &=
      \kbordermatrix{
  &0&1&g&e\\
  0&(1-\epsilon ) (1-l_{g} - l_{e}) & \epsilon (1-l_{g} - l_{e}) & \frac{s_{g}}{2} & \frac{s_{e}}{2} \\
  1&\epsilon  (1-l_{g} - l_{e}) & (1-\epsilon ) (1-l_{g} - l_{e}) & \frac{s_{g}}{2} & \frac{s_{e}}{2} \\
  g&l_{g} & l_{g} & 1-s_{g} & 0 \\
  e&l_{e} & l_{e} & 0 & 1-s_{e}
                            }.
                            \label{eq:transitionmatrix}
\end{align}
The conditional fidelity of the dual-rail qubit conditional on not
having leaked is \(1-\epsilon\). The conditional error is \(\epsilon\).
The other parameters are introduced
in Fig.~\ref{fig:fourlevels}. All parameters have values between \(0\)
and \(1\). Because \(l_{g}+l_{e}\) is the total leakage probability,
\(l_{g}+l_{e}\leq 1\).  The traditional quantities of interest are the
conditional error, the unconditional fidelity
\((1-\epsilon)(1-l_{g}-l_{e})\), and the total leakage probability
\(l_{g}+l_{e}\).

In the absence of taking advantage of available constraints, the only
information available from the data is the spectrum of \(Q\).  We
assume that the leakage and seepage probabilities are strictly between
\(0\) and \(1\).  Then the matrix \(Q\) has a complete set of
eigenvalues. Because \(Q\) is symmetric under swap
of the two computational states, \(Q\) preserves the subspace spanned
by \((-1,1,0,0)^{T}\) and the subspace spanned by
\((1/2,1/2,0,0)^{T}, (0,0,1,0)^{T}, (0,0,0,1)^{T}\).  The eigenvalue
of \((-1/2,1/2,0,0)^{T}\) is \(\lambda_{1}=(1-l_{g}-l_{e})(1-2\epsilon)\).
Restricted to the second subspace and in the basis given, \(Q\) acts as
the \(3\times 3\) Markov matrix
\begin{align}
  R &= \begin{pmatrix}
    (1-l_{g}-l_{e})&s_{g}&s_{e}\\
    l_{g}&(1-s_{g})&0\\
    l_{e}&0&(1-s_{e})\\
  \end{pmatrix}.
  \label{eq:Rinvariant}
\end{align}
The steady state of this Markov matrix has eigenvalue \(1\) and is
proportional to \((1,l_{g}/s_{g}, l_{e}/s_{e})^{T}\).  Conjugating
\(R\) by the diagonal matrix \(D\) with diagonal
\((1,\sqrt{s_{g}/l_{g}},\sqrt{s_{e}/l_{e}}) \) makes \(R\) symmetric,
\begin{align}
  DRD^{-1}&=
            \begin{pmatrix}
    (1-l_{g}-l_{e})&\sqrt{l_{g}s_{g}}&\sqrt{l_{e}s_{e}}\\
    \sqrt{l_{g}s_{g}}&(1-s_{g})&0\\
    \sqrt{l_{e}s_{e}}&0&(1-s_{e})\\
  \end{pmatrix}.
\end{align}
which implies that \(R\) is diagonalizable. Let \(\lambda_{-}=1-\delta_{-}\) and \(\lambda_{+}=1-\delta_{+}\)
be the two eigenvalues of \(DRD^{-1}\) different from \(1\). Because
\(\tr(DRD^{-1}-\one)=-(l_{g}+s_{g})-(l_{e}+s_{e})\),
we have \(\delta_{+}+\delta_{-}=l_{g}+l_{e}+s_{g}+ s_{e}\).
The product \(\delta_{+}\delta_{-}\) is given by the coefficient of \(x\)
in \(\det(x\one-(\one-DRD^{-1}))\). This coefficient is given by
\begin{align}
  \delta_{+}\delta_{-}
  &= s_{g}s_{e} %
    + l_{g}s_{e}   
    + l_{e} s_{g}   
    \notag\\
  &= (l_{g}+s_{g})(l_{e}+s_{e})-l_{g}l_{e}
    .
\end{align}
The solutions are
\begin{align}
  \delta_{\pm}&=\frac{1}{2}\qty((l_{g}+l_{e}+s_{g}+ s_{e}) \pm \sqrt{(l_{g}+l_{e}+s_{g}+ s_{e})^{2}
  - 4((l_{g}+s_{g})(l_{e}+s_{e})-l_{g}l_{e})})
                \notag\\
  &=\frac{1}{2}\qty((l_{g}+l_{e}+s_{g}+ s_{e}) \pm \sqrt{((l_{g}+s_{g})-(l_{e}+ s_{e}))^{2}
  +4l_{g}l_{e}})
  .
\end{align}
These eigenvalues constrain the qubit fidelities and the total leakage
probability.

Given the observable probabilities we can determine the eigenvalues of
\(Q\) from those of the gauge free matrix \(Y\) introduced before
Eq.~\eqref{eq:gaugefree}.  It is necessary to determine which
combinations of the parameters \(\epsilon,l_{g},l_{e},s_{g},s_{e}\)
are compatible with the eigenvalues. These parameters can be computed
from the three eigenvalues different from \(1\) and two of the
parameters, \(l_{g}\) and \(s_{g}\).  Let \(t_{g}=l_{g}+s_{g}\) and
\(t_{e}=l_{e}+s_{e}\). Given the eigenvalues we can directly obtain
\(b=\delta_{+}+\delta_{-} = t_{g}+t_{e}\) and
\(c=(\delta_{+}-\delta_{-})^{2}= (t_{g}-t_{e})^{2}+4l_{g}l_{e}\).
From \(\lambda_{1},b,c,l_{g},s_{g}\) we can determine in order
\begin{align}
  t_{g}  &= l_{g}+s_{g},
           \notag\\
  t_{e}  &=b-t_{g},
           \notag\\
  l_{e}  &=\qty(c -(t_{g}-t_{e})^{2})/(4 l_{g}) = \qty(c-(2t_{g}-b)^{2})/(4 l_{g}),
           \notag\\
  s_{e} &= t_{e}-l_{e},
          \notag\\
  \epsilon &= (1-\lambda_{1}/(1-(l_{g}+l_{e})))/2
             .
             \label{eq:lgsg}
\end{align}

We use only one state preparation \(\pi\). It is intended to prepare
the state \(\ket{0_{L}}\), which corresponds to the ideal initial probability
distribution
\begin{align}
  \hat{\pi} &=
              \kbordermatrix{&\\
              0&  1\\
  1& 0\\
  g& 0\\
  e& 0\\
  }
\end{align}
We constrain \(\pi\) by requiring that
\(-0.2\leq \hat{\pi}-\pi\leq 0.2\) entry-wise, so we conservatively constrain the
fidelity of the \(\ket{0_{L}}\) preparation to be at least \(0.8\).

We use two measurements \(M_{1}\) and \(M_{2}\) as described in
Sect.~\ref{sec:dual-rail_qubit}. The \(M_{r}\) are \(3\times 4\)
matrices whose columns are probability vectors. They are intended to implement
the measurement matrices
\begin{align}
  \hat{M}_{1}
  &=
    \kbordermatrix{
  &0&1&g&e\\
  0&0& 0& 1 & 0\\
  1&0&0&0&1\\
  2&1& 1& 0& 0
           },
                            \notag\\
  \hat{M}_{2}
  &=
    \kbordermatrix{
  &0&1&g&e\\
  0&1& 0& 0& 0\\
  1&0& 1& 0 & 0\\
  2&0& 0& 1 & 1
            },
    \label{eq:ideal_measurements}    
\end{align}
The constraints on \(M_{r}\) require
\(-0.2\leq \hat{M}_{r}-M_{r}\leq 0.2\) entry-wise.  These constraints
imply conservative bounds on the probabilities that the first
measurement correctly identifies the leaked states, and the second
measurement correctly identifies the computational states. In addition
they limit the misidentification of the computational subspace for
\(M_{1}\) and of the leakage subspace for \(M_{2}\).

In the experiment, we randomly used one of the two possible
permutations of the two computational states to determine the last
step in the sequences.  We didn't record which final permutation was
applied to the computational states, but in the case of \(M_{2}\) we
applied the inverse of the permutation to the outcomes of the
measurements.  In terms of our model, this twirls \(M_{2}\) by the two
permutations. That is, Eq.~\eqref{eq:prob_j_r_h_n} is replaced by
\begin{align}
  p_{j,2,n}&= \sum_{h}\hat{c_{2}}[h]M_{2}\hat g[h] Q^{n}\pi_{j},
             \label{eq:prob_j_r_h_n'}
\end{align}
where \(\hat{c_{2}}[h]\) is the permutation applied to the outcomes for
\(M_{2}\).  The effect of randomizing the final permutation is that
the \(M'_{2}\) has the form
\begin{align}
  M'_{2} &=
           \kbordermatrix{
  &0&1&g&e\\
  0&a& b& c& d\\
  1&b& a& c & d\\
  2&e& e& f & g
              }.
              \label{eq:mr_twirl}
\end{align}
For \(M_{1}\), forgetting which final permutation was applied has the effect
of standardizing \(M'_{1}\) to the form
\begin{align}
  M'_{1} &=
           \kbordermatrix{
  &0&1&g&e\\
  0&a'& a'& c'& d'\\
  1&b'& b'& e' & f'\\
  2&g'& g'& h' & j'
                 }.
                 \label{eq:mr_twirl_1}
\end{align}

\section{Parameter inference with constraints}
\label{sec:gaugeopt}

Given data for the dual rail qubit, we apply ML inference as outlined
in Sect.~\ref{sec:inference} to obtain an ML fit
\(\phi_{\ML}=(\pi_{\ML},M_{1,\ML},M_{2,\ML},Q_{\ML})\) that satisfies
the assumed constraints. Here we summarize our strategy for inferring
ranges of values of the parameters that are consistent with both
\(\phi_{\ML}\) and the constraints, including the ones due to
randomizing the final permutation that is applied to the computational
states. See App.~\ref{app:gauge search} for details of the strategy.

Consider the set \(F\) of parameter vectors
\(\bm{f}=(f_{1},\ldots, f_{5}) \defeq
(\epsilon,l_{g},l_{e},s_{g},s_{e})\) for which there exist
\(\phi=(\pi,M_{1},M_{2},Q)\) that is gauge equivalent to
\(\phi_{\ML}\) with \(Q\) of the form in
Eq.~\eqref{eq:transitionmatrix} and \(\pi,M_{1},M_{2}\) satisfying the
constraints.  To determine an outer approximation to \(F\), we first
standardize \(\phi\) with a gauge-free representation based on the
formalism of Sect.~\ref{sec:decoheredrb}.  For every pair \(\theta\)
of values of the leakage and seepage parameters \(l_{g},s_{g}\), the
remaining parameters \(\epsilon, l_{e},s_{e}\) can be determined from
the eigenvalues of the gauge-free representation according to
Eq.~\eqref{eq:lgsg}. Write \(\bm{f}(\theta)\) for the parameters so
obtained.  From \(\bm{f}(\theta)\) we obtain \(Q\).  If there exist
\(\pi,M_{1},M_{2}\) that satisfy the constraints and for which
\(\phi\) with \(Q\) obtained from \(\theta\) is gauge equivalent to
\(\phi_{\ML}\), we say that \(\theta\) is valid.  Testing for validity
is difficult.  Instead, we apply algorithmic necessary conditions for
there to exist \(\pi,M_{1},M_{2}\) that satisfy the constraints and
for which \(\phi\) is gauge equivalent to \(\phi_{\ML}\). If the
necessary conditions are satisfied, we call \(\theta\) a feasible
pair. Accordingly, every valid pair is feasible.  Let \(L\) be the set
of such feasible pairs. For each parameter \(f_{i}\), the interval
\([\min(\bm{f}(L)_{i}),\max(\bm{f}(L)_{i})]\) contains the set of
values of \(f_{i}\) consistent with \(\phi_{\ML}\) and the
constraints. Instead of a point estimate of \(f_{i}\), we can report
this interval. To account for statistical uncertainty, we determine
bootstrap confidence intervals that account for the unknown but
constrained gauge. See Sect.~\ref{sec:exresults}.

To implement our strategy, let \(M_{\ML}\) be the \(4\times 4\) matrix
consisting of the first two rows of \(M_{2,\ML}\) followed by the
first two rows of \(M_{1,\ML}\). Given that the constraints are
satisfied in the experiment, \(M_{\ML}\) is with high probability full
rank.  This is because \(M_{1,\ML}\) and \(M_{2,\ML}\) should be close
to \(\hat{M}_{1}\) and \(\hat{M}_{2}\) up to a gauge transformation
on the right. We choose \(U=\one\) in the gauge-free formulation of
Sect.~\ref{sec:decoheredrb} and determine
\(Y=M_{\ML} Q_{\ML}M_{\ML}^{-1}\) and \(A=M_{\ML}\pi_{\ML}\).  Next we
write \(Y=V\Lambda V^{-1}\), where \(\Lambda\) is the full-rank diagonal matrix
with the eigenvalues of \(Y\) on the diagonal, \(V\) is the matrix
whose columns are the right eigenvectors of \(Y\). 

To map out the set \(L\) of feasible pairs \(\theta\), we first
determine the constraints on \(\theta=(l_{g},s_{g})\) induced by the
requirement that the matrix \(Q\) determined by \(\bm{f}(\theta)\) is
a Markov transition matrix. This determines a region
\(\cA^{\mathrm{phys}}\) of physically sensible values of
\(\theta\). We obtain an enclosing band \(\cB\) of this region to
which we confine the search for feasible pairs.  We place a grid of
candidate pairs in this band, where we align the grid to the right
singular vectors of the Jacobian of \(\bm{f}\) at the maximum
likelihood fit.  The grid spacings in each direction are the desired
precision divided by the corresponding singular values of the
Jacobian.

For each candidate pair \(\theta\), we compute \(\bm{f}(\theta)\) to
determine a \(Q\) that is gauge equivalent to \(Y\).  Write
\(Q=W\Lambda W^{-1}\) with \(W\) the matrix whose columns are the
right eigenvectors of \(Q\). The similarity
transformation that takes \(Y\) to \(Q\) is \(WV^{-1}\). The
associated values of \(\pi\) and \(M\) satisfy the equation
\(MQ^{n}\pi = Y^{n}A\) for all \(n\). Rewriting the equation gives
\begin{align}
  M W\Lambda^{n}W^{-1}\pi = V\Lambda^{n}V^{-1}A.
  \label{eq:mn_gauge}
\end{align}
We require that \(\{Y^{k}A\}_{k}\) and therefore
\(\{\Lambda^{k}V^{-1}A\}_{k}\) are spanning and show that \(V^{-1}MW\)
commutes with \(\Lambda\).  It follows from these equations that
\(\qty(V^{-1}MW) \Lambda^{k}W^{-1}\pi= \Lambda^{k}V^{-1}A\) for all
\(k\), so
\begin{align}
  \Lambda^{n-k}\;\Lambda^{k}V^{-1}A
  &= \Lambda^{n} V^{-1}A\notag\\
  &=
    \qty(V^{-1}MW)\Lambda^{n}W^{-1}\pi\notag\\
  &=\qty(V^{-1}MW)\Lambda^{n-k}\qty(V^{-1}MW)^{-1}\qty(V^{-1}MW)\Lambda^{k}W^{-1}\pi
    \notag\\
  &=\qty(V^{-1}MW)\Lambda^{n-k}\qty(V^{-1}MW)^{-1}\;\Lambda^{k}V^{-1}A.
\end{align}
Let \(k=n-1\) and use arbitrariness of \(n\) to conclude that
\(\Lambda=\qty(V^{-1}MW)\Lambda\qty(V^{-1}MW)^{-1}\), and consequently
\(V^{-1}MW\) commutes with \(\Lambda\).

The commutativity established in the previous paragraph implies that
\(M=V\Sigma W^{-1}\) for a real diagonal matrix \(\Sigma\) with
diagonal \(\bm{\sigma}=(\sigma_{1},\ldots,\sigma_{4})\).  Thus
  \(M\) is a linear function of \(\Sigma\).  From
Eq.~\eqref{eq:mn_gauge} for \(n=0\), we see that \(M\pi=A\) and
therefore \(\pi=W\Sigma^{-1} V^{-1}A\).  To determine whether
\((l_{g},s_{g})\) is feasible we need to determine whether there
exists \(\Sigma\) such that \(M\) and \(\pi\) satisfy the
constraints. Since \(M\) is linear in \(\bm{\sigma}\), and the
constraints on \(M\) are linear, we obtain a set of linear constraints
on \(\bm{\sigma}\) that ensure that \(M\) satisfies its
constraints. Let \(C_{M}\) be the convex set consisting of the
\(\bm{\sigma}\) satisfying these constraints. For a pattern of four
signs \(\bm{s}=(s_{1},\ldots,s_{4})=(\pm1,\pm1,\pm1,\pm1)\) let
\(C_{M,\bm{s}}\) be the intersection of \(C_{M}\) with the orthant
where the entries of \(\bm{\sigma}\) have the given signs.

Consider a given set \(C_{M,\bm{s}}\) of candidate feasible values of
\(\bm{\sigma}\).  Necessary conditions for satisfying the constraints
on \(\pi\) can be obtained as follows. From
\(\Sigma^{-1} V^{-1}A=W^{-1}\pi\) we deduce that
\(\sigma_{i}=(V^{-1}A)_{i}/(W^{-1}\pi)_{i}\). Let \(s'_{i}\) be the
sign of \((V^{-1}A)_{i}\).  The constraints on \(\pi\) imply readily
obtained upper and lower bounds on the possible values of
\((W^{-1}\pi)_{i}\) with the sign \(s'_{i}s_{i}\). These bounds imply
upper and lower bounds on \(\sigma_{i}\) in the orthant under
consideration. If the four thus obtained upper and lower bounds can be
satisfied in at least one \(C_{M,\bm{s}}\), we declare
\((l_{g},s_{g})\) feasible.  Feasibility is a relaxation of
validity. This relaxation is an adaptation of the McCormick
envelope~\cite{McCormick1976}.

\section{Experimental results and comparisons}
\label{sec:exresults}

\subsection{DSRB results}
\label{sec:exresults:dsrb}

The design of an experiment requires that we choose a family
\(\{(s_{j},m_{j}, n_{j})\}_{j=1}^{d}\) of configurations consisting of
preparations, measurements and sequence lengths , where \(n_{j}\)
trials are performed at configuration \(j\). For the dual-rail qubit
experiment, we used only one preparation, so configurations need only
specify the measurement and sequence length.  The design used in our
experiment is described in App.~\ref{sec:coptimaldesign}, where we
give the fraction of trials performed at each configuration.  The
total number of trials was \(4\times 10^{6}\), and they were assigned
to each configuration according to their design fraction, truncating
to integers as needed. Ideally, the sequence for each trial is chosen
randomly and independently according to the theory in
Sect.~\ref{sec:decoheredrb}. Full randomization is required for
correctness of the likelihood function and has the benefit of
eliminating added variance from running the same sequence more than
once~\cite{kwiatkowskiThesis2024,kwiatkowski2023optimized}.  However,
our control hardware did not permit such full randomization. Instead,
for each configuration we chose \(40\) random sequences, where we ran
\(1/40\) of the trials assigned to the configuration in one block of
consecutive trials for each chosen sequence. We used the same \(40\)
sequences for the two measurement settings at a given sequence length,
when both measurement settings had non-zero fractions of trials
assigned to them.  The choice of final permutation in the sequence was
also random and instead of recording the choice, the outcome was
labeled by whether it was as expected for the ideal sequence
(``correct'') or not (``incorrect'').  This has the effect of
symmetrizing the measurement matrices as explained at the end of
Sect.~\ref{sec:stochasticmodel}.

When the experiment is not fully randomized, the distribution of
outcome probabilities over sequences contributes extra variance and
requires changing the likelihood functions in the maximum likelihood
algorithm.  Since the sequence distribution is unknown, the likelihood
function is not determined. One can use the empirical covariance
matrices for the sequence-dependent outcome frequencies with a
Gaussian heuristic to change the problem to one of applying a weighted
least-squares fit. This method is effective if the number of trials
for each configuration is reasonably large and the outcome frequencies
are well inside the probability simplex.  We used this method for
applying the comparison leakage benchmarking methods to the data from
configurations involving measurement \(M_{2}\), see below. It is
possible to apply this method also for fitting the Markovian model. To
determine the effect on the estimates from changing the objective
function in maximum likelihood to weighted least squares, we
implemented weighted least squares, initializing the optimizer with
the maximum likelihood solution found by expectation maximization.
While the endpoints of the confidence intervals including gauge shift
by somewhat more than expected based on the bootstrap distributions,
the intervals have significant overlap. For example, the conditional
error and total leakage parameters from the weighted least squares fit
and from maximum likelihood satisfy
\(\epsilon\in[0.0006,0.0011]\) 
and \(l_{g}+l_{e}\in[0.0102,0.0110]\) compared to
\(\epsilon\in[\conderrl,\conderrh]\) and
\(l_{g}+l_{e}\in[\totalleakl,\totalleakh]\) for the ML fit.  See the
violin-plot histograms of gauge interval endpoints in
Fig.~\ref{fig:result} for the endpoint histograms.  Because of this
overlap and the preference for full randomization, we are presenting
the results from the maximum likelihood analysis of
Sect.~\ref{sec:inference}.  However, we account for the added variance
from not fully randomizing by modifying the bootstrap resampling
method when determining the bootstrap confidence intervals, see below.

The results from analyzing the experimental data are shown in
Fig.~\ref{fig:result}.  The first step of the analysis is to obtain a
maximum likelihood estimate satisfying the SPAM constraints by
applying the algorithm described in App.~\ref{app:em}.  The
expectation maximization algorithm returns a critical point of the
likelihood function. To help find the parameters maximizing the
likelihood, we used four random starting points obtained from random
bounded perturbations of the reference point of
App.~\ref{sec:coptimaldesign} ``projected'' back into preparation,
measurement and transition matrices satisfying the constraints. See
Sect.~\ref{sec:notes} for more details on implementation.

The maximum likelihood estimate is determined only up to gauge, so the
second step is to determine a good bounding region of the parameters
\(\bm{f}\) of the transition matrix by applying the algorithm
described in App.~\ref{app:gauge search}.  We project this region onto
each of the coordinates of \(\bm{f}\) to obtain point estimates of the
upper and lower bounds for each of the parameters
\(\epsilon, l_{g},l_{e},s_{g},s_{e}\). The figure shows confidence
intervals for the parameters, which are obtained by non-parametric
bias-corrected bootstrap as explained in the next paragraph.

The analysis for the upper and lower bound point estimates is
performed as if the experiment were fully randomized. The extra
variance from not fully randomizing is taken into account by the
bootstrap. For this purpose, we non-parametrically bootstrap resample
the data \(1000\) times with the following procedure. For each
configuration, we sample with replacement \(40\) times from the
\(40\) sequences used. For each chosen sequence, let \(m\) be the
number of trials for this sequence. We randomly sample \(m\)
outcomes from the empirical frequencies of the observed outcomes.
We then apply the steps used to obtain the point estimates to the
bootstrap resample. This bootstrapping procedure yields \(1000\)
bootstrap upper and lower bounds for each of the parameters. The
distributions of the bounds obtained are shown as violin-plot
histograms in Fig.~\ref{fig:result}.  For the final confidence
intervals shown in the table of the figure, we used bias-corrected
bootstrapping~\cite{efron_bootstrap,efron_jackknife} for one-sided
confidence bounds for each end of the resampled interval estimates.

As an indication of what preparation vector and measurement matrices
are found by the ML algorithm, their point estimates are given by
\begin{align}
  \pi_{\ML} &=
              \kbordermatrix{&\\
  0& 0.9567\\
  1& 0.0079\\
  g& 0.0315\\
  e& 0.0038\\
  },  \label{eq:pi_mlpoint}
\end{align}
\begin{align}
  M_{1,\ML}
  &=
    \kbordermatrix{
  &0&1&g&e\\
  0& 0.0437& 0.0437& 0.9868& 0.2000\\
  1& 0.0072& 0.0072& 0.0132& 0.8000\\
  2& 0.9490& 0.9490& 0.0000& 0.0000
                             },
                             \notag\\
  M_{2,\ML}
  &=
    \kbordermatrix{
  &0&1&g&e\\
  0& 0.9326& 0.0097& 0.0296& 0.1000\\
  1& 0.0097& 0.9326& 0.0296& 0.1000\\
  2& 0.0576& 0.0576& 0.9408& 0.8000
                             }.           \label{eq:M_MLpoint}
\end{align}
These estimates are representative of the ML gauge and are therefore
interpretable only up to gauge freedom. They satisfy the constraints
by design. The only saturated constraints are the ones related to the
outcome probabilities if the state before measurement is \(\ket{ee}\).
This is likely due to the fact that because of the small value of
\(l_{e}\), the population in \(\ket{ee}\) in the experiment must have
been close to undetectable. This implies there is very little basis
for inferring these outcome probabilities. We also observe that the
probability of outcome \(2\) with measurement \(M_{2,\ML}\) for a
computational state is of the order of \(6\;\%\). The gauge interval
for this probability is \([0.051,0.066]\).

\begin{figure}[h]

  \includegraphics[width=\linewidth]{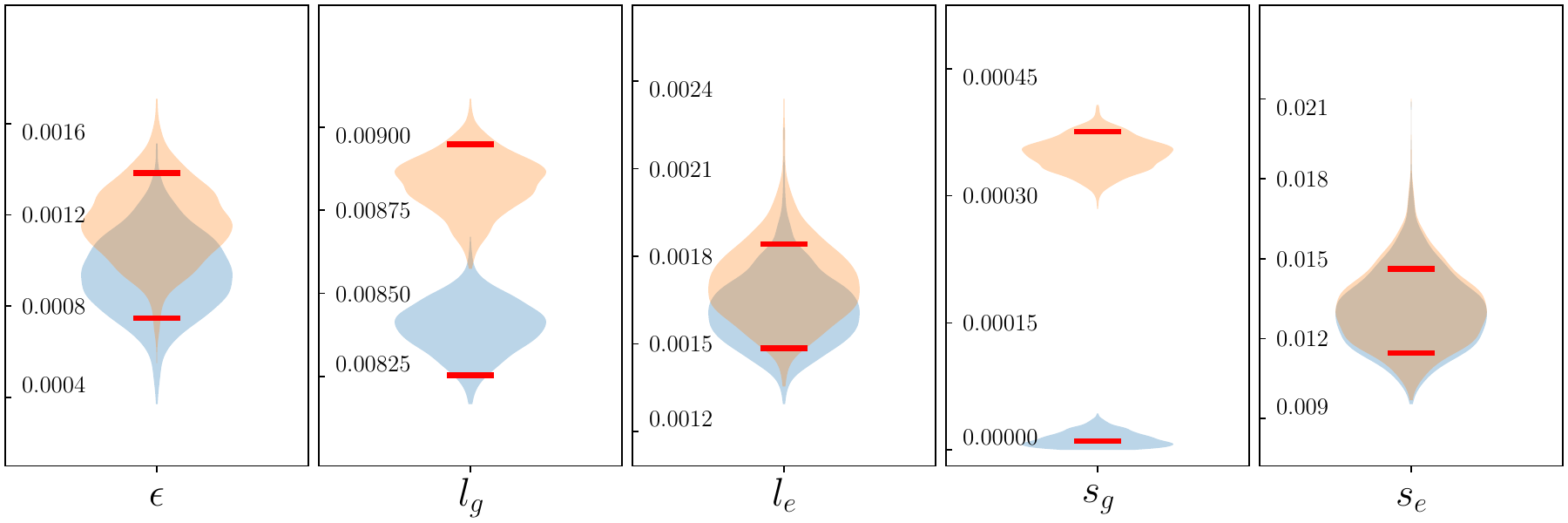}

  \vspace{1em}

  \begin{center}
  \begin{tabular}{cc}
    \hline\hline
    Parameter & \(68\;\%\) confidence Interval\\
    \hline
    $\epsilon$ & $[0.00075, 0.00138]$ \\
    $l_g$      & $[0.00825, 0.00895]$ \\
    $l_e$      & $[0.00148, 0.00184]$ \\
    $s_g$      & $[0.00001, 0.00038]$ \\
    $s_e$      & $[0.01146, 0.01462]$ \\
    \hline\hline
  \end{tabular}
  \end{center}

  \caption{Analysis results from 1000 nonparametric bootstraps.  The
    subplots show the bias-corrected \(68\;\%\) confidence intervals
    (red horizontal lines) and violin-plot bootstrap histograms for
    the bootstrap upper (orange histograms) and lower bounds (blue
    histograms) for each parameter of the transition matrix. The
    numerical confidence intervals are listed in the table.  }
  \label{fig:result}
\end{figure}

\subsection{Comparison to conventional leakage benchmarking}

For comparison we consider the three alternative methods referenced in
the introduction. Unlike DSRB, these methods infer the total leakage
rate \(l_{g}+l_{e}\) and the error \(\epsilon\) conditional on not
having leaked. They make assumptions that eliminate the gauge freedom.
They use only one measurement setting, corresponding to measurement
\(M_{2}\), so for comparison, we consider only the configurations with
this measurement.  For these configurations, the measurement outcomes
are classified as ``correct'', ``incorrect'' or ``leaked'', labeled
with \(0,1\) or \(2\) in Eqs.~\eqref{eq:ideal_measurements}
and~\eqref{eq:M_MLpoint}.  Each method involves a model for the
probabilities of some or all of these outcomes as a function of
sequence length and is based on fits to the model.  For our
comparison, all fits are based on weighted least squares with weights
determined by the empirical variance of the relevant combination of
outcome frequencies over sequences at each configuration. The variance
is determined from the empirical covariance matrix for the outcome
frequency vectors and accounts for the lack of full sequence
randomization.  This is motivated by the fact that the number of
trials at each sequence with measurement \(M_{2}\) is large, at least
\(7,578\) so that the binomial sampling variance at a given sequence
is negligible compared to the sequence variance. Except in one case
(see below), the fits exclude the length \(0\) sequences, both because
they did not contribute to DSRB and because they significantly
worsened the fit quality in several cases.

The first step of each method is to infer the total leakage rate
\(l_{g}+l_{e}\) and if possible the total seepage rate \(s_{g}+s_{e}\)
from the observed frequencies of not having leaked as a function of
sequence length. This step involves a fit to an exponential decay or a
linear fit for short sequence lengths. 
In the second step, the methods fit a model to the observed
frequencies of ``correct'' and ``incorrect'' outcomes as a function of
sequence length. The average conditional gate error \(\epsilon\) is
determined from the parameters of the model used.

The Wood-Gambetta
method (WG)~\cite{woodQuantificationCharacterizationLeakage2018} assumes
that the leakage subspace is depolarized and dephased from the
computational subspace in each step of the randomized sequence.  In
the Markov model it suffices to assume that the two leakage levels
behave identically, as if depolarized, or only one leakage level
participates significantly. The gauge problem is avoided by assuming
that there are no measurement errors for the three outcome
measurement. The DSRB estimates in Fig.~\ref{fig:result} suggest that
these assumptions are significantly violated.  The protocol is
described in Sect. III of the reference. In the first step, total
leakage and seepage is determined by fitting an exponential decay to the
observed frequencies of ``leaked'' outcomes. The fitting function has the form
\(A+B\lambda_{1}^{n}\).  From the parameters of the fit one infers
the total leakage and total seepage. In the next step, the frequencies
of ``correct'' outcomes are fitted to a model with two exponential decays,
where one of the decay parameters is fixed by the first fit.  The
conditional error \(\epsilon\) is then determined by the fit
parameters from both fits.

For the postselected method (Post) we use an extension of leakage randomized
benchmarking as described in Ref.~\cite{hashim2025practical}
Sect. VIII.I.1.  Various versions of postselected methods are used in
the references cited in the introduction. We use the version described
next.  The total leakage rate and seepage rate (if desired) is
determined as in the first step of the Wood-Gambetta method and makes
the same assumptions. The conditional error is obtained from the
decay parameter of a fit to \(C+D\lambda_{3}^{n}\) of the ratio of the
frequencies of ``correct'' outcomes to the sum of the frequencies of
``correct'' and ``incorrect'' outcomes. Validity of the conditional error
inference requires no seepage and no error in identifying the
computational subspace in the measurement as well as decoherence
between the computational and leakage subspaces. The weights for this
fit are obtained from the sample variance of the ratios being fitted
over the \(40\) random sequences. The
inferred functional model for the probabilities of ``correct'' outcomes is then
\((1-A-B\lambda_{1}^{n})(C+D\lambda_{3}^{n})\).

There are a number of possible Chen-Baldwin (CB) protocols and their
combinations that can be applied. The possibilities are summarized in
Ref.~\cite{ChenRBLeakage2025} Table I. Here we combine a version of
leakage estimation under the short lengths assumption with an estimate
of the conditional error according to the average over measurement
basis (Avg.~MB) protocol under the population transfer assumption.
The short lengths assumption is valid when the leakage rate is small
compared to the inverse of the sequence lengths used, and the
probability of misidentifying the leaked subspace in the measurement
is negligible.  The population transfer assumption imposes dephasing
between the computational and the leakage subspace and between leaked
levels in each step. Additional assumptions on the measurement and
the gates are required, both to avoid the gauge problem and to ensure
a simplification of the averaged measurement operators, see Eq.~(16)
in the reference.

Total leakage is determined by fitting a straight line to the
frequencies of ``leaked'' outcomes for the short sequence lengths.  We use only
the first two available lengths, namely length \(0\) and length
\(14\). The next available length is \(80\) and does not satisfy the
short lengths assumption.  For the fit to total leakage we modify
Eq.~(29) in the reference by allowing an arbitrary offset to account
for an initial leaked population.  The total leakage rate is
determined by fitting a line to the observed frequencies of
``computational'' (``correct'' or ``incorrect'') outcomes for short sequence
lengths.  Because there are only two sequence lengths, the line is
simply determined by the frequencies at length \(0\) and \(14\).  The
slope of this fit yields an estimate of the total leakage rate
$l_{g}+l_{e}$.  Because the other lengths did not satisfy the short
lengths assumption, we used only the lengths \(0\) and \(14\) in the
experimental design for this estimate. This method for estimating
total leakage does not yield a complete model for the frequencies of
``leaked'' outcomes.
  
To determine the conditional error \(\epsilon\) we used the Avg.~MB
protocol.  This protocol requires averaging over the final permutation
in a sequence before a \(d\)-outcome measurement. See the description
in the paragraphs after Eq.~(13) in Ref.~\cite{ChenRBLeakage2025}.
The recorded data does not contain the exact information required by
the Avg.~MB protocol. However, under the assumption that the
measurement has no error and in view of our randomization of the final
permutation, the Avg. MB protocol requires that we fit an exponential
decay \(D'\lambda_{4}^{l}\) to the difference between the frequencies of
``correct'' and ``incorrect'' outcomes, where we modified the protocol slightly
by fitting both the amplitude and \(\lambda_{4}\).  From the decay
rate of the fit one obtains an estimate of the eigenvalue
\begin{align}
  \lambda_1 = (1-l_{g}-l_{e})(1-2\epsilon)
\end{align}
of the Markov process. 
We can solve for \(\epsilon\) given the
estimate of \(l_{g}+l_{e}\) from the total leakage determination.
In the absence of measurement error and given the true value
of total leakage, this is a consistent estimate of \(\epsilon\) according
to the Markov model.

The values of total leakage and conditional error obtained with the
three alternative methods are shown and compared to those obtained
with DSRB in Fig.~\ref{fig:comparison} (a). Notably, the conditional
error is biased high, while the total leakage is biased low, compared
to DSRB. The biases are significant in some cases, but are subject to
uncharacterized biases from model violation.  None of the alternatives
account for the gauge problem, which accounts for a notable fraction of
the confidence interval for DSRB.

\begin{figure}[h]
  \centering
  \includegraphics[width=0.7\columnwidth]{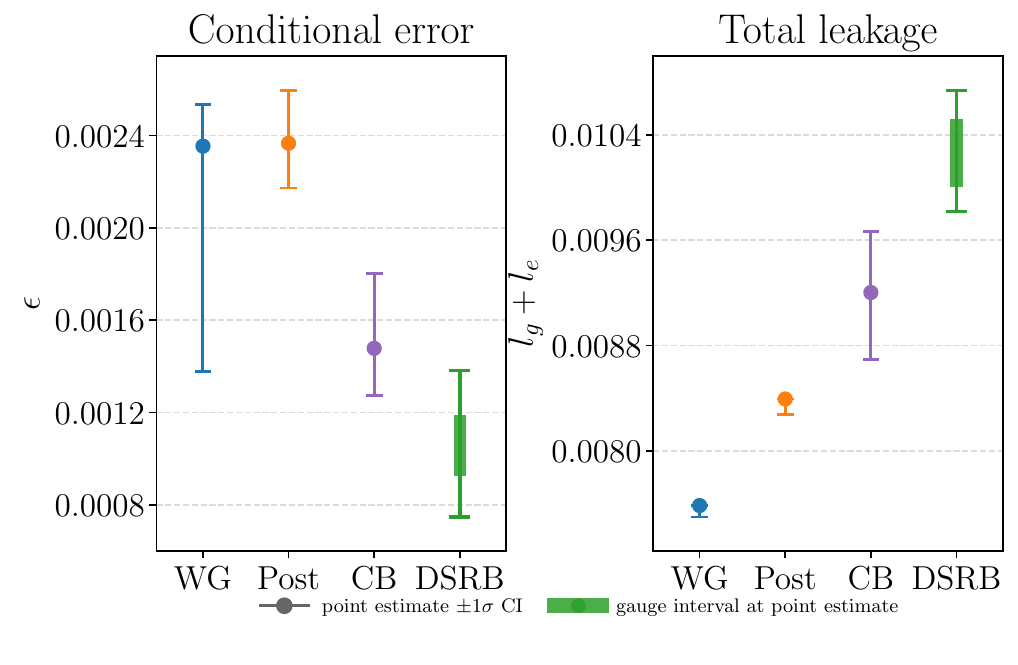}
  \caption{Estimates of the conditional error \(\epsilon\) and total
    leakage \(l_{g}+l_{e}\) according to the three alternative methods
    compared to DSRB.  The estimates for WG, Post and CB are based on
    the configurations with measurement \(M_{2}\) only.}
  \label{fig:comparison}
\end{figure}

The observed biases could be due to poor fits, for example by failing
to account for contributions from sufficiently many exponential decays
that can arise from having different leakage and seepage rates to the
two leaked levels.  We therefore performed a comparison of fit
quality, with fit quality based on the sample-covariance-weighted
mean-square difference with the empirical covariance matrix for the
outcome frequency vectors. This is also called the Hotelling's \(T^2\)
statistic and can be viewed as a version of the usual \(\chi^{2}\)
statistic that accounts for finite sampling effects.  We considered
the \(T^{2}\) for fits to functions of the observed frequencies,
specifically the difference between the frequencies of ``correct'' and
``incorrect'' outcomes (``correct''\(-\)``incorrect''), the frequencies of
``leaked'' outcomes, and the frequencies of ``correct'' outcomes.  To compare to DSRB,
we also calculated \(T^{2}\) for the DSRB point estimate. The gauge
freedom of the point estimate does not affect the value of this
statistic.  DSRB is based on maximum likelihood and therefore
optimizes a different statistic with data that includes the other
measurement setting. The fit quality for the DSRB point estimate is
therefore not optimized over Markov model parameters, which may affect
the comparison.  The \(T^{2}\) statistics obtained for DSRB and the
three alternatives are shown in Table~\ref{table:T2}, with values
itemized by sequence length.  Of the three fits considered, CB only
models ``correct''\(-\)``incorrect'', while WG and Post model all
three.

The fact that CB directly fits the frequency difference is reflected
in a lower \(T^{2}\). DSRB and the Wood-Gambetta method have similar
\(T^{2}\), but the postselected method performs worse, to a large
extent because of a poor match at sequence length \(14\).  The DSRB
fits are overall better than those of WG and Post, except that CB is a
good match to ``correct''\(-\)``incorrect'', as might be expected from
the Markov model.

\begin{table}[htbp]
  \centering
  \caption{Table of \(T^{2}\) statistic fits for configurations with
    measurement \(M_{2}\) for DSRB and the three alternative
    methods. All alternative methods provide fits to
    ``correct''\(-\)``incorrect''. WG and Post also provide fits to
    ``leaked'' and to ``correct''. For ``leaked'' Post uses the same
    method as WG.}
  \label{table:T2}
  \begin{tabular}{|r|rrrr|rr|rrr|}
    \hline
    &\multicolumn{4}{c|}{``correct''\(-\)``incorrect''}&
    \multicolumn{2}{c|}{``leaked''}&
    \multicolumn{3}{c|}{``correct''}\\
    \(n\) & DSRB & WG & Post & CB& DSRB & WG/Post&DSRB & WG & Post\\[0pt]
    \hline
    \(14\) & \(0.24\) & \(0.01\) & \(31.32\) & \(0.00\)& \(0.33\) & \(34.80\)& \(0.39\) & \(13.34\) & \(44.99\)\\[0pt]
    \(80\) & \(0.05\) & \(0.83\) & \(1.98\) & \(0.12\)& \(13.54\) & \(39.74\)& \(1.00\) & \(5.15\) & \(0.05\)\\[0pt]
    \(227\) & \(0.96\) & \(0.09\) & \(0.06\) & \(0.14\)& \(3.85\) & \(0.47\)& \(0.31\) & \(0.17\) & \(0.12\)\\[0pt]
    \(228\) & \(0.52\) & \(0.69\) & \(0.59\) & \(0.00\)& \(27.77\) & \(40.06\)& \(4.71\) & \(1.08\) & \(1.20\)\\[0pt]
    \(599\) & \(0.02\) & \(0.01\) & \(0.83\) & \(0.02\)& \(21.45\) & \(2.96\)& \(1.19\) & \(0.06\) & \(0.37\)\\[0pt]
    \hline
    total & \(1.78\) & \(1.63\) & \(34.78\) & \(0.29\)& \(66.94\) & \(118.03\)& \(7.60\) & \(19.79\) & \(46.73\)\\[0pt]
    \hline
  \end{tabular}
\end{table}

\begin{figure}[h]
\centering
\includegraphics[width=0.6\columnwidth]{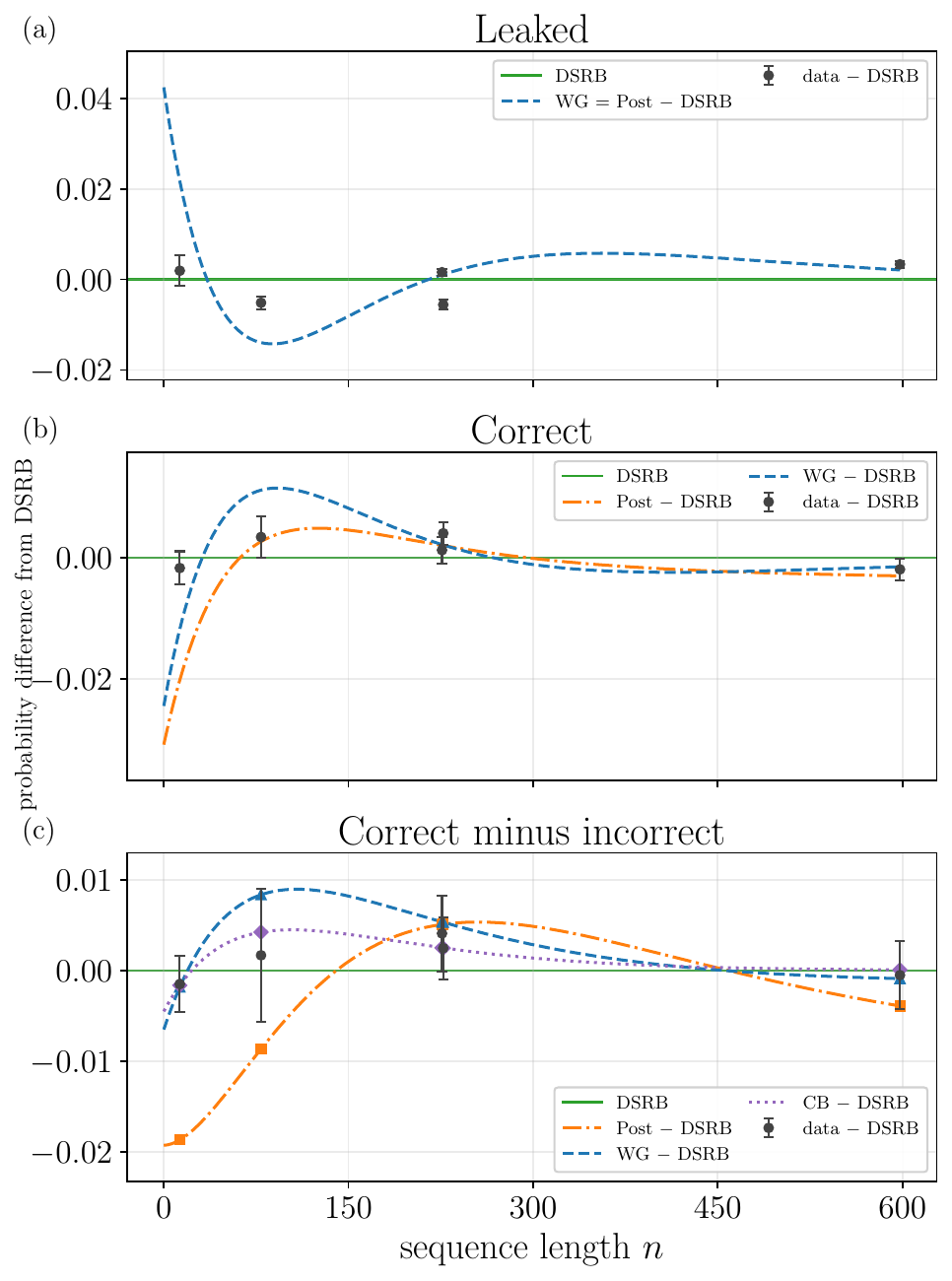}
\caption{Visualizations of the fits by DSRB and the comparison
  methods.  Each subplot shows
  the frequencies according to the data with
  error bars of one standard deviation together with the probabilities
  according to the relevant fits. To highlight the differences to DSRB,
  the DSRB probabilities are subtracted. Subplot (a) shows
  the probabilities of ``leaked''. In this case WG and Post use the same
  fitting procedure. CB does not have a complete model for these probabilities
  and is not shown. Subplot (b) shows the probabilities of ``correct'' for
  WG and Post. Again, CB does not have a complete model for these probabilities.
  Subplot (c) shows the difference probabilities ``correct''\(-\)``incorrect'',
  and all methods are applicable.}
\label{fig:comparison residuals}
\end{figure}

We also checked the \(T^{2}\) for the fit of DSRB to all observed
frequencies at all configurations. The values of \(T^{2}\),
itemized by configuration are
\begin{center}
  \begin{tabular}{|lr|}
    \hline
    Measurement setting & \(T^{2}\)\\
    \hline
    Measure \(M_{2}\) (``resolve computational''), \(5\) lengths & \(77.27\)\\
    Measure \(M_{1}\) (``resolve leaked''), \(9\) lengths & \(235.40\)\\
    \hline
    total & \(312.67\)\\
    \hline
  \end{tabular}
\end{center}
For comparison, if the observed frequencies are sampled from the
Markov model with \(40\) sequences at each configuration with unknown
sequence covariance for the three outcomes, the expected value of the
total \(T^{2}\) with respect to this model is \(30.33\) with a
standard deviation of \(8.56\) based on Monte Carlo sampling.  The
total \(T^{2}\) with respect to a fit is smaller. Therefore, the large
value determined above suggests that the experiment deviated from the
Markov model. Such deviations can cause difficult to assess systematic
biases in the estimated quantities.

\subsection{Experiment and analysis history}

For context we describe the history of the experiment and its
analysis. The experiment was designed and implemented in mid 2024
while the theory was still under construction.  The data from the
experiment was held back while analysis methods and code were in
development. We did not have sufficient data to divide it into
practice data and data to apply a final analysis to.  Analysis methods
were first tested on simulations. However, real data often amplifies
issues that are not apparent in simulations, and the data from this
experiment was no exception. Consistency checks on the results from
our first analysis attempts in 2025 revealed identifiable problems
that required correction.  In early 2026, competent AI coding
assistance became available.  AI help was used to correct for the
initial issues found and dramatically improve the efficiency of the
analysis code, in part by drawing on advanced results from the
literature outside our fields.  In the meantime, the fully developed
theory showed that there were additional constraints on measurement
matrices arising from the relationship between the outcome probability
distributions for different final permutations. We eventually
discovered that these constraints are not automatically discovered by
maximum likelihood, making it necessary to explicitly respect them in
inference. In addition, the experimental design included trials at
sequence length \(0\), which are not accounted for by the Markov model
and therefore needed to be excluded.  The experiment and its analysis
were very much a learning experience, and ideally we would update the
experimental design and perform a confirmatory experiment. However,
the experimental system was retired in mid 2025 with no replacement or
an upgraded version currently available.

Whenever analysis of experimental data undergoes multiple revisions,
it is possible to introduce biases in the final results.  The
possibility of such biases can be revealed by the history of the
reported values over analysis attempts that subsequently had to be
corrected. In our case, there were two cycles of analysis before we
used AI assistance to reimplement and improve the analysis code and
three major cycles afterwards.  The last three relevant inferred gauge
intervals for conditional error and total leakage were, in order
\begin{center}
  \begin{tabular}{cccc}
    \hline
    parameter & first & second & last\\
    \hline\hline
    \(\epsilon\)&[0.0016,0.0023]&[0.0012,0.0016]&[\conderrl,\conderrh]\\
    \(l_{g}+l_{e}\)&[0.0098,0.0106]&[0.0099,0.0104]&[\totalleakl,\totalleakh]\\
    \hline
  \end{tabular}
\end{center}
The second analysis added the measurement matrix identity constraints
of Eq.~\eqref{eq:mr_twirl} and~\eqref{eq:mr_twirl_1}. The third
analysis corrected an offset of one in translating sequence length to
number of Markov steps and dropped inclusion of the length \(0\)
sequences.  While the conditional error shows a trend of improvement,
the total leakage is stable. We believe that the differences are
likely of the order of systematics introduced by not fully randomizing
and other out-of-model effects witnessed by the large total \(T^{2}\)
for the DSRB fit.

\Ac{We did not perform a specific investigation to see how the gauge
  intervals depend on the SPAM quality constraints. If either measurement
  or preparations are precisely known, there is no gauge. If the quality
  constraints are relaxed too much, the labels of the leaked states
  become arbitrary.}

\section{Discussion}
\label{sec:discussion}

The \(T^{2}\) for our model and visual inspection of the fits suggests
that the experimentally observed frequencies exhibit features not
captured by the dual-rail qubit Markov model. Some of these features
could come from not fully randomizing. However, since changing the
objective function from the fully randomized likelihood to the square
deviation according to the empirical covariance matrices has no
significant effect on the results, it is likely that the features are
due to other unmodelled effects.  Known such effects include
gate-dependent errors, incomplete decoherence and coherent leakage,
direct transitions between leaked states, state decay during
measurement, and experimental drifts.  Nevertheless, our analysis of
the fits shows that future experiments would benefit greatly from full
randomization of the sequences to prevent the added variance from
using the same sequence many times and to ensure correctness of the
likelihood function used in maximum likelihood inference. In future
experiments, we will likely revisit our methods for optimizing the
experimental design and include additional configurations for better
diagnostics of model discrepancies.

Gauge degeneracy is a common problem in state and process inference in
quantum experiments, most notably in gate-set tomography
(GST)~\cite{nielsen2021gate}. In GST, gauge degeneracy is removed by
minimizing the distance to the user's specified ideal state
preparations, gates and measurements.  In our application, the number
of parameters that need to be inferred is greatly reduced by the
effect of twirling due to the random sequences.  It is nevertheless
plausible that applying well-supported constraints on the performance
of GST quantum operations can be used to constrain inferred quantities
in an experimentally meaningful way. A number of such constraints have
been shown to be advantageous in
GST~\cite{moueddene2022contextaware,PRXQuantum.4.010325,vinas2025microscopic}.

The theory of decohering subgroup randomized benchmarking is general and
can be applied to most randomized benchmarking scenarios that can
implement gates that are close to implementing a subspace
representation of a group with a subgroup that decoheres an orthogonal
basis.  The theory shows that the observed frequencies can be modeled
by a classical Markov chain.  Here we only considered gate-independent
errors, where the gates are members of the group. If the members of
the group require composing gates from a more basic set, errors
usually depend on the group element, even if the basic gates have
gate-independent noise. The effect of this and other gate dependence
needs to be accounted for.  We expect that the theory can be extended
to gate-dependent errors, as has been done for many randomized
benchmarking scenarios~\cite{proctor2017randomized,wallman2018randomized,merkel2021randomized,helsen2022generalrb,Claes2021}. Another
extension worth considering is to add interleaved
gates~\cite{magesan2012efficient,gaebler2012randomized,chasseurCompleteRandomizedBenchmarking2015,wu2024leakage} to the random gate sequences.

Our algorithms and their implementation show that the Markov chain
model can be effectively inferred from data, and the region of
gauge-equivalent model parameters can be effectively determined or at
least bounded in the presence of box-constraints on SPAM parameters as
well as physical constraints on transitions. Our implementation is
specific to the dual-rail qubit in our experiment. The algorithms
generalize to other situations involving more leakage levels or
qubits. The algorithms are practical as long as the number of free
parameters after twirling is sufficiently small.

\begin{acknowledgments}
  \label{ack}
  This work includes contributions of the National Institute of
  Standards and Technology, which are not subject to
  U.S. copyright. The U.S. Government is authorized to reproduce and
  distribute reprints for governmental purposes notwithstanding any
  copyright annotation thereon. The use of trade names and software
  does not imply endorsement by the US government, nor does it imply
  these are necessarily the best available for the purpose used
  here. This document has not been peer reviewed but has been cleared
  by NIST for release.

  Code and data for this work is available for download
  from Ref.~\cite{geller2026leakage}.

  Use of AI: The leakage benchmarking theory was developed without the
  use of AI. In the main text, AI was used to assist with production
  of schematics and minor pre-submission editorial assistance. The
  experiment and the original versions of the analysis methods and
  code were implemented in early 2025 without the use of AI.  Once
  sufficiently competent AIs became available in 2026, the analysis
  code was dramatically improved and reimplemented by AI. Much of the
  description of the specific implementation in the appendix was initially
  generated by AI. The AI generated text needed to be substantially rewritten
  for human readability.
\end{acknowledgments}

\bibliography{leakagerb,em_constrained}

\pagebreak

\appendix
\makeatletter
\renewcommand{\p@subsection}{\thesection.}
\makeatother

\section{A-optimal experimental design}
\label{sec:coptimaldesign}

For the experiment, we optimized and implemented an experimental
design that determines the fraction of the total number of trials to
run at each configuration.  The goal of the optimization is to
minimize a sum of the variances of the optimal linear estimators
associated with the parameters $\bm{f}$ of the transition matrix at a
reference point for preparations, measurements and the transition
matrix. The reference point was based on device characterization
that was performed before the experiment and is given below. Let
\(\nu_{1}(n)\) and \(\nu_{2}(n)\) be the fraction of the total number
of trials of the experiment to be performed at setting \(M_{1}\)
(``resolve leaked'') and at setting \(M_{2}\) (``resolve
computation''), respectively.  We set an upper bound of \(599\) on the
sequence length so that \(\sum_{n=0}^{599}(\nu_{1}(n)+\nu_{2}(n))=1\).

We used techniques from optimal design
theory~\cite{fedorov2010optimal} to optimize
\(\nu_{1}(n), \nu_{2}(n)\).  The version of the techniques used here
is described in
Ref.~\cite{geller2024characterizationnoninteractingbosonsapplications},
section 4.3.  Because of the gauge freedom, the parameters \(\bm{f}\)
are not uniquely inferrable. We therefore minimized the sum of the
variances of a standardized maximal set of inferrable parameters as
described below. The reference point used to optimize the experimental
design is given by
\begin{align}
  \text{\begin{tabular}{c|c|c|c|c}
          \(\epsilon\) & \(l_g\) & \(l_e\) & \(s_g\) & \(s_e\) \\
          \hline
          0.005 & 0.01 & 0.002 & 0.001 & 0.04 \\
        \end{tabular}}
\end{align}
\begin{align}
  \pi &= \begin{pmatrix}0.98 \\ 0.008 \\ 0.008 \\ 0.004\end{pmatrix},\\
  M_{1} &=  \begin{pmatrix}
    .03 & .02 & .92 & .03\\
    .02 & .03 & .05 & .94\\
    .95 & .95 & .03 & .03
  \end{pmatrix},\\
  M_{2} &=  \begin{pmatrix}
    .95 & .02 & .005 & .03\\
    .02 & .95 & .005 & .03\\
    .03 & .03 & .99  & .94
  \end{pmatrix}.
\end{align}
The measurement \(M_{1}\) does not satisfy all the constraints of
Eq.~\eqref{eq:mr_twirl_1}. This is because at the time the
experimental design was optimized, we did not yet appreciate the
consequences of randomizing the final permutation. As a result we did
not include the constraints of Eqs.~\eqref{eq:mr_twirl}
and~\eqref{eq:mr_twirl_1} in determining the sum of variances to be
optimized in the next paragraph.

Because the columns of \(\pi\), \(M_{1}\) and \(M_{2}\) sum to \(1\),
there are \(24\) independent parameters.  We considered sequence
lengths between \(0\) and \(599\). We combined \(\nu_{1}\) and
\(\nu_{2}\) into one vector \(\nu\) of length \(1200\). Because of the
sum constraint, \(\nu\) has \(1199\) independent variables.  To
determine an effective sum of variances to minimize, it is necessary
to compute the Jacobian at the reference point of the map from the
\(24\) independent parameters determining the model to the independent
observable probabilities at the lengths and measurements under
consideration.  Because of gauge, this Jacobian is not full rank. The
gauge-free representation used in Sect.~\ref{sec:gaugeopt} has \(19\)
free parameters: Because one of the eigenvalues of \(Y\) is \(1\),
\(Y\) has at most \(15\) free parameters, while \(A\) has at most
\(4\) free parameters. To confirm independence of these parameters, we
verified numerically that the rank of the Jacobian is \(19\). For the
next step, we used the Euclidean inner product with respect to the
basis for the \(24\) independent parameters obtained from the first
two rows of each \(M_{i}\), the first three entries of \(\pi\) and the
entries of \(\bm{f}\). We obtained an orthonormal basis of the
orthogonal complement of the null space of the Jacobian and
constructed the orthogonal, \(24\times 19\) matrix \(V\) whose columns
consist of the basis elements.  The matrix \(V\) can be interpreted as
an isometry that maps the \(19\) standardized inferrable parameters
into the \(24\)-dimensional parameter space, where the model has been
linearized at the reference point.  We can consider the linearized
model restricted to the \(19\) standardized parameters, where these
parameters are first mapped to the \(24\) parameters with \(V\), then
to the observable probabilities.  Because we wish to minimize the sum
of variances for the parameters of \(\bm{f}\), we define \(U\) as the
matrix consisting of the rows of \(V\) corresponding to the parameters
of \(\bm{f}\) according to our coordinatization.  The objective
function \(g\) is obtained by computing the inverse Fisher information
matrix \(F\) for the \(19\) standardized parameters and computing
a weighted combination of its diagonal entries
according to
\begin{align}
  g = \sum_{i=1}^{19}w_{i}^{2}F_{ii},
\end{align}
where \(w_{i}^{2}=|\sum_{p}U_{pi}|^{2}\).  This choice is somewhat
arbitrary and loosely motivated by the desire to have reasonable
inferrability of all parameters while prioritizing \(\bm{f}\).  The
procedure for optimizing the objective function \(g\) described in
Ref.~\cite{geller2024characterizationnoninteractingbosonsapplications}
can then be applied to optimize \(\nu\).

The optimized design is shown in Tab.~\ref{tab:optimaldesign} with a
visual representation in Fig.~\ref{fig:optimalfractions} accompanied
by the predicted probabilities of outcomes for the two measurements at
the reference point as a function of sequence length.  
\begin{table}[htbp]
\centering
\caption{Optimal experimental design fractions by sequence length.}
\begin{tabular}{c@{\hspace{0.45in}}c@{\hspace{0.45in}}c}
  \hline\hline
  Sequence&Resolve&Resolve\\
  length&computational&leaked\\
  &\(\nu_{2}\)&\(\nu_{1}\)\\
\hline
0   & 0.057 & 0.054 \\
14  & 0.161 & 0.155 \\
70  & --    & 0.071 \\
71  & --    & 0.006 \\
80  & 0.161 & --    \\
124 & --    & 0.001 \\
125 & --    & 0.064 \\
227 & 0.082 & --    \\
228 & 0.076 & --    \\
353 & --    & 0.001 \\
354 & --    & 0.013 \\
355 & --    & 0.008 \\
599 & 0.077 & 0.014 \\
\hline\hline
\end{tabular}
\label{tab:optimaldesign}
\end{table}
\begin{figure}[h]
  \centering
  \includegraphics[width=\columnwidth]{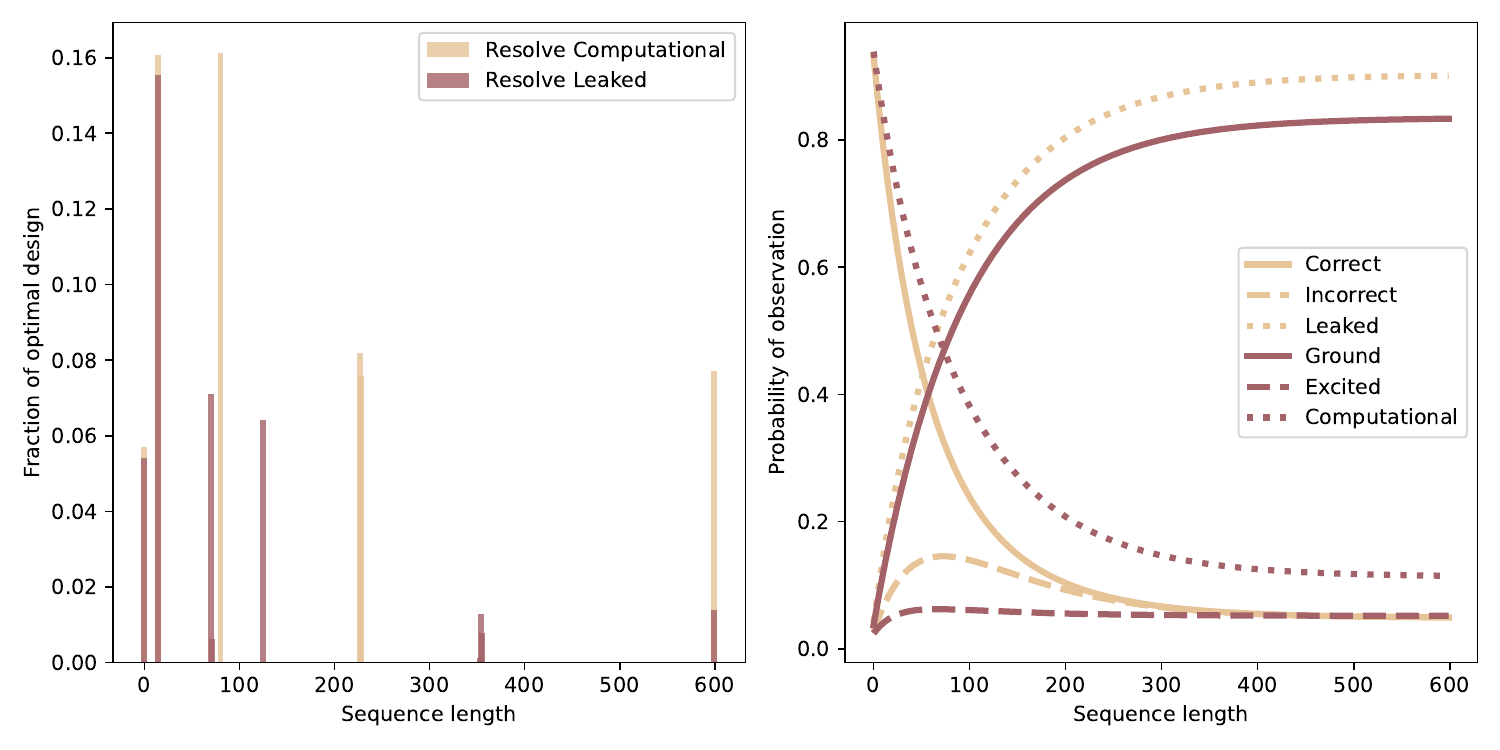}
  \caption{Experimental design. The left subfigure shows the design
    fractions at each measurement setting and length. The right
    subfigure shows the probabilities of the outcomes at the reference
    point as a function of length. ``Correct'', ``Incorrect'' and
    ``Leaked'' are the outcomes of measurement \(M_{2}\), which
    resolves the computational subspace.  ``Ground'', ``Excited'' and
    ``Computational'' are the outcomes of measurement \(M_{1}\), which
    resolves the leaked subspace.}
  \label{fig:optimalfractions}
\end{figure}

\pagebreak

\ifSubfilesClassLoaded{
  \title{Constrained EM for the Markov model}
  \author{}
  \date{\today}
  \maketitle
}{
  \section{Constrained EM for the Markov model}
  \label{app:em}
  \let\section\subsection
}

In this appendix, we present an expectation-maximization algorithm to
find a critical point of the constrained maximum-likelihood problem
for the four-state Markov model given in the main text. Critical
points are points satisfying the Karush–Kuhn–Tucker (KKT) conditions.
The critical point found is reported as the maximum likelihood
solution. We give a description of the algorithmic techniques used to
implement the expectation-maximization iteration for our
application. We then explain how we accelerate convergence of
expectation maximization by applying the SQUAREM-3 scheme of
Refs.~\cite{varadhan2008squarem,duvaradhan2020squarem}.  Since the
algorithms are specializations of well-known techniques, we rely on
the references for details. Our implementation can be found with
the code for this project at GitHub~\cite{geller2026leakage}.

Define the transition parameters
\[
  \bm\theta=(\teps,\,l_g,\,l_e,\,s_g,\,s_e),
  \qquad
  \teps\defeq\epsilon(1-l_g-l_e),
  \qquad
  \me\defeq(1-\epsilon)(1-l_g-l_e)=1-l_g-l_e-\teps,
\]
so that
\begin{align}
  Q &=
  \kbordermatrix{
    &0&1&g&e\\
    0 & \me     & \teps  & s_g/2 & s_e/2\\
    1 & \teps   & \me    & s_g/2 & s_e/2\\
    g & l_g     & l_g    & 1-s_g & 0\\
    e & l_e     & l_e    & 0     & 1-s_e
  }.
  \label{eq:Qreparam}
\end{align}
Every entry of \(Q\) is a linear function of \(\bm\theta\).  The set \(\cF_{Q}\) 
of feasible parameters $\bm{\theta}$ is defined to be the
parameters for which \(Q\) is a valid transition matrix. \(\cF_{Q}\)
is a polytope and is given by
\begin{align}
  \mathcal F_Q
  &=
  \{\bm\theta\in\rls^5\,:\,
  \teps,l_g,l_e,s_g,s_e\geq 0,\;
  \teps+l_g+l_e\leq 1,\;
  s_g\leq 1,\;s_e\leq 1\}.
  \label{eq:Qfeasible}
\end{align}

The model parameters are \(\phi=(\bm\pi,\{M_r\}_r,\bm{\theta})\),
where \(\bm\pi\) is the single preparation vector and \(\{M_r\}_r\)
are the per-setting measurement matrices.  The constraints on
\(\bm\pi\) and the \(M_r\) are column stochasticity,
\(\sum_i\pi(i)=1\) and \(\sum_o(M_r)_{oi}=1\), together with entrywise
box bounds. For the constraints used in the data analysis for the
dual-rail it suffices to add the lower bounds \(\pi(0)\geq 0.8\) and
\((M_r)_{j,k}\geq 0.8\) for \(j,k\) such that
\((\hat{M}_{r})_{j,k}=1\).  Each constraint is linear.  Let
\(\cF_{\bm{\pi}}\) and \(\cF_{\bm{M}}\) be the polytopes of parameters
\(\bm{\pi}\) and \(\{M_{r}\}_{r}\) satisfying these constraints.  The
set of parameters \(\phi\) for which \(\bm{\theta}\) is feasible and
\(\bm{\pi}\) and \(\{M_{r}\}\) satisfy their constraints is the
Cartesian product of polytopes
\[
  \mathcal F\defeq\mathcal F_{\bm\pi}\times\mathcal F_{\bm M}\times\mathcal F_Q.
\]

Each trial of the experiment has a configuration \(u=(m,n)\) that
specifies the measurement index \(m(u)=m\) and sequence length
\(n(u)=n\) for the trial. Let \(\cU\) be the set of configurations
used in the experiment, and for \(u\in\cU\), let \(y_{u}\) be the
number of trials with configuration \(u\).  A trial with configuration
\(u\) according to the Markov model with parameters \(\phi\) prepares
\(\bm\pi\), runs \(n(u)\) steps of the Markov chain, and ends with the
measurement \(M_{m(u)}\). Because we randomized the final permutation
and relabeled outcomes accordingly as accounted for by the measurement
constraints of Eqs.~\eqref{eq:mr_twirl} and ~\eqref{eq:mr_twirl_1}, we
do not need to separately account for the final permutation.  The
outcome probabilities for configuration \(u\) are
\begin{align}
  p_u(o\mid\phi)
  =
  (M_{m(u)} Q^{n(u)}\bm\pi)_o.
  \label{eq:trial-likelihood}
\end{align}
Let $y_{u, o}$ be the number of times that outcome $o$ occurred in
configuration $u$ so that \(\sum_{o}y_{u,o}=y_{u}\).  The
log-likelihood is then
\begin{align}
  \ell(\phi)
  &=
  \sum_{u\in\cU}\sum_{o=1}^O y_{u,o}\log p_u(o\mid\phi) + C_{\bm{y}},
  \label{eq:obs-ll}
\end{align}
where \(C_{\bm{y}}\) is the logarithm of a product of multinomial coefficients,
which depends only on the outcome counts \(y_{u,o}\).
To solve the maximum likelihood problem the expectation maximization algorithm
finds a critical point $\phi^\star$ for the problem
\begin{align}
  \mathop{\mathrm{argmax}}_{\phi\in\mathcal F}
  \ell(\phi).
  \label{eq:constrained-mle}
\end{align}
The solution is non-unique, because of the gauge degeneracy of the
model.  It is also possible that there is more than one connected
region of critical points. Given sufficient data and constraints that
prevent permutation of the two leaked levels, we do not expect this to
be an issue.

\subsection{Basic EM algorithm}
\label{subapp:basicem}

The general expectation-maximization (EM)
algorithm~\cite{dempster1977maximum} is a maximum-likelihood method
for models whose likelihood is a marginal over unobserved, ``hidden''
random variables. It iterates by alternating two steps referred to as
the E-step and the M-step. At each iteration it updates the current
estimate of the parameters to obtain the next estimate. The E-step
defines an auxiliary objective function of the candidate next estimate
to be maximized in the M-step.  The objective function is an
expectation with respect to the probability distribution of the hidden
random variables conditional on the observation and the current
estimate of the parameters. The expectation is of the log-likelihood
of the data and the hidden random variables with respect to the
candidate next estimate of the parameters.  This log-likelihood is
referred to as ``complete-data log-likelihood'' in the terminology
of~\cite{dempster1977maximum,mclachlan2008em}. The next estimate
consists of the parameters that maximize the auxiliary objective
function.  The iterations cannot decrease the observed-data
likelihood~\cite{dempster1977maximum, wu1983convergence}.  The method
is most useful when the maximization in the M-step has a closed form
solution. For hidden Markov models the hidden variables are the
unobserved states along the Markov chain in each instance of the
chain. The sequence of unobserved states is referred to as the latent
state sequence. For hidden Markov models, it is possible to use the
Baum-Welch
algorithm~\cite{baum1970maximization,ephraim2002hidden}. In this
algorithm, the E-step is computed by the forward-backward recursion,
and the M-step optimization problem has a closed-form solution. We use
an adaptation of the Baum-Welch algorithm that takes into account two
features of the present model: the parameters carry box constraints,
and each sequence yields a single measurement outcome at its end
rather than the conventional observation at every step.

\noindent\textbf{Complete-data log-likelihood:}  Let
$B = \{0, 1, g, e\}$ be the four states of the Markov process.  For
each individual trial, let \(\bm z=(z_0,z_1,\ldots,z_{n(u)})\in B^{n(u)+1}\)
be the latent sequence of states visited. The per-trial complete-data
log-likelihood for a trial of configuration \(u\) that produced
outcome \(o\) is
\begin{align}
  \log\Prob(o,\bm z\mid u,\phi)
  &=
  \log\pi(z_0)
  +\sum_{t=0}^{n(u)-1}\log Q_{z_{t+1},z_t}
  +\log M_{m(u)}(o\mid z_{n(u)}).
  \label{eq:per-trial-cll}
\end{align}
Write \(\bm z^{(u,o,k)}\) for the
latent sequence of the \(k\)th trial of configuration \(u\) that resulted in
outcome \(o\), with \(k\in\{1,\ldots,y_{u,o}\}\). Summing
Eq.~\eqref{eq:per-trial-cll} over all \(K\) trials gives the complete-data
log-likelihood
\begin{align}
  \log\Prob(\bm y,\bm z\mid\phi)
  &=
  \sum_{u\in\cU}\sum_{o=1}^O\sum_{k=1}^{y_{u,o}}
  \Biggl[
    \log\pi(z^{(u,o,k)}_0)
    +\sum_{t=0}^{n(u)-1}\log Q_{z^{(u,o,k)}_{t+1},z^{(u,o,k)}_t}
    +\log M_{m(u)}(o\mid z^{(u,o,k)}_{n(u)})
  \Biggr].
  \label{eq:complete-ll}
\end{align}

\noindent\textbf{E-step:}
Given the current estimate
\(\phi^{(\tau)}=(\bm\pi,\{M_r\}_r,\bm{\theta})\) after \(\tau\)
iterations, define the forward vectors
\(\alpha_t(i)=\Prob(z_t=i\mid\phi^{(\tau)})\) and the backward vectors
\(\beta_t^{(u,o)}(i)=\Prob(o\mid z_t=i,u,\phi^{(\tau)})\).  These
vectors can be computed recursively according to
\begin{align}
  \alpha_0 &=\bm\pi,
  &
  \alpha_{t+1} &=Q\,\alpha_t,
  &&t=0,\ldots,n(u)-1.
  \label{eq:forward}\\
  \beta_{n(u)}^{(u,o)} &=M_{m(u)}^{\,T} e_o,
  &
  \beta_t^{(u,o)} &=Q^T\beta_{t+1}^{(u,o)},
  &&t=n(u)-1,\ldots,0.
  \label{eq:backward}
\end{align}
Eq.~\eqref{eq:forward} propagates
\(\alpha\) upward in \(t\), from the preparation at \(t=0\) toward the
measurement. Computing the sequence \(\alpha_0,\alpha_1,\ldots\) is
referred to as the forward pass. Because there is a single preparation, this
sequence is shared by every configuration and depends on \(u\) only
through the length \(n(u)\).  Eq.~\eqref{eq:backward}
propagates \(\beta\) backwards in \(t\), from the measurement at
\(t=n(u)\) back toward the preparation. Computing \(\beta\) is referred to as the
backward pass~\cite{baum1970maximization,ephraim2002hidden}.
Reading off the final forward vector gives the outcome distribution
\begin{align}
  \bm{p}_u =
  M_{m(u)}\alpha_{n(u)},
  \label{eq:pu-from-alpha}
\end{align}
which satisfies \(p_u(o)=\Prob(o\mid u,\phi^{(\tau)})\) and
depends on only the forward pass. 

The key ingredient in computing the auxiliary objective function for
the M-step is the probability distribution \(\mu\) of the hidden
random variables conditional on the observation given that the
parameters are \(\phi^{(\tau)}\).  The objective function is defined
as the expectation with respect to \(\mu\) of the complete-data
log-likelihood of Eq.~\eqref{eq:complete-ll}.  Each summand of the
complete-data log-likelihood depends on only one state or two
consecutive states in the latent state sequences, so it suffices to
compute marginal probabilities of these states.  For each
configuration \(u\) and outcome \(o\), the needed probabilities are
the single-state marginal probabilities \(\gamma_t^{(u,o)}\) which are
vectors over \(i\in B\), and the consecutive-pair marginal
probabilities \(\xi_t^{(u,o)}\), which are matrices over
\((i,j)\in B^2\). These are given by
\begin{align}
  \gamma_t^{(u,o)}(i)
  &= \Prob(z_t=i\mid o,u,\phi^{(\tau)})
   = \frac{\alpha_t(i)\,\beta_t^{(u,o)}(i)}{p_u(o)},
  \\
  \xi_t^{(u,o)}(i,j)
  &= \Prob(z_{t+1}=i,\,z_t=j\mid o,u,\phi^{(\tau)})
   = \frac{\alpha_t(j)\,Q_{ij}\,\beta_{t+1}^{(u,o)}(i)}{p_u(o)}.
  \label{eq:gamma-xi}
\end{align}

When computing the expectation of the complete-data log-likelihood of
Eq.~\eqref{eq:complete-ll}, each summand is multiplied by the appropriate
marginal conditional probability.  The sum of these probabilities over
the trials is the expected initial-state occupancy, the expected
number of \(j\to i\) transitions, and the expected outcome
numbers. These expectations are given by
\begin{subequations}\label{eq:suff-defn}
\begin{align}
  \Gamma^{\pi}_{i}
  &= \sum_{u\in\cU}\sum_{o=1}^O y_{u,o}\,\gamma_0^{(u,o)}(i)
     = \sum_{u\in\cU}\sum_{o=1}^O y_{u,o}\,\frac{\alpha_0(i)\,\beta_0^{(u,o)}(i)}{p_u(o)},
  \label{eq:Gpi-defn}\\
  N_{ij}
  &= \sum_{u\in\cU}\sum_{o=1}^O\sum_{t=0}^{n(u)-1}
    y_{u,o}\,\xi_t^{(u,o)}(i,j) =
    \sum_{u\in\cU}\sum_{t=0}^{n(u)-1}\sum_{o=1}^O
    y_{u,o}\,\frac{\alpha_t(j)\,Q_{ij}\,\beta_{t+1}^{(u,o)}(i)}{p_u(o)},
  \label{eq:N-defn}\\
  \Gamma^M_{r,o,i}
  &= \sum_{u:\,m(u)=r} y_{u,o}\,\gamma_{n(u)}^{(u,o)}(i)
    = \sum_{u:\,m(u)=r} y_{u,o}\,\frac{\alpha_{n(u)}(i)\,\beta_{n(u)}^{(u,o)}(i)}{p_u(o)}.
  \label{eq:GM-defn}
\end{align}
\end{subequations}
\(N_{ij}\) and \(\Gamma^{\pi}_{i}\) depend on \(\beta_{t}^{(u,o)}\) only through the
combined backwards vectors
\begin{align}
  b_t^{(u)}
  &\defeq
  \sum_{o=1}^O \frac{y_{u,o}}{p_u(o)}\,\beta_t^{(u,o)}.
  \label{eq:b-defn}
\end{align}
The \(b_{t}^{(u)}\) can be computed recursively the same way as the \(\beta_{t}^{(u,o)}\):
\begin{align}
  b_{n(u)}^{(u)} &= M_{m(u)}^{\,T}\!\left(\frac{\bm y_u}{\bm p_u}\right),
                   \qquad
                   b_t^{(u)} = Q^T b_{t+1}^{(u)}\quad(t=n(u)-1,\ldots,0),
                   \label{eq:b-recursion}
\end{align}
where \(\bm y_u/\bm p_u\) denotes elementwise division.  The E-step
depends on the \(\beta_{t}^{(u,o)}\) only through \(b_{t}^{(u)}\),
which are best computed directly with Eq.~\eqref{eq:b-recursion}.
Substituting the combined backwards vectors and the definitions of
\(\alpha_{0}(i)\) and \(\beta^{(u,o)}_{n(u)}\) gives
\begin{subequations}\label{eq:suff-stats}
  \begin{align}
    \Gamma^{\pi}_{i}
    &=\pi(i)\sum_{u\in\cU}b_0^{(u)}(i),
      \label{eq:Gpi-counts}\\
    N_{ij}
    &= Q_{ij}\sum_{u\in\cU}\sum_{t=0}^{n(u)-1}
      \alpha_t(j)\,b_{t+1}^{(u)}(i),
      \label{eq:N-counts}\\
    \Gamma^M_{r,o,i}
    &=\sum_{u:\,m(u)=r}\frac{y_{u,o}}{p_u(o)}\,
      \alpha_{n(u)}(i)\,
      (M_{m(u)}^{\,T}e_o)_i.
      \label{eq:GM-counts}
  \end{align}
\end{subequations}

The log-likelihood of the data with respect to the current estimate
is given by \(\ell(\phi^{(\tau)})=\sum_{u,o}y_{u,o}\log p_u(o)\),
which requires only the forward vectors to compute. Tests for
convergence of the EM algorithm require keeping track of
\(\ell(\phi^{(\tau)})\).

The auxiliary objective function of the candidate next estimate
\(\phi=(\bm\pi',\{M_r'\}_r,\bm{\theta}')\) is obtained by taking the
expectation of Eq.~\eqref{eq:complete-ll} over the latent state
sequences with respect to full conditional probability distribution
\(\prod_{u,o,k}\Prob(\bm z^{(u,o,k)}\mid u,o,\phi^{(\tau)})\). After
substituting the quantities obtained above, it is given by
\begin{align}
  \mathcal Q(\phi\mid\phi^{(\tau)})
  &=
  \underbrace{\sum_{i}\Gamma^{\pi}_{i}\log\pi'(i)}_{\defeq\,\mathcal L_\pi(\bm\pi')}
  +
  \underbrace{\sum_{r,o,i}\Gamma^M_{r,o,i}\log(M_r')_{o,i}}_{\defeq\,\mathcal L_M(\bm M')}
  +
  \underbrace{\sum_{i,j}N_{ij}\log Q_{ij}(\bm{\theta'})}_{\defeq\,\mathcal L_Q(\bm\theta')}.
  \label{eq:Q-objective}
\end{align}
We refer to summands on the right-hand side as the \(\pi\)-block, the
\(Q\)-block, and the \(M\)-block.

\noindent\textbf{M-step:}
Because the three blocks of Eq.~\eqref{eq:Q-objective} depend on
disjoint parameters of the candidate next estimate, they can be
maximized independently. The \(\pi\)- and \(M\)-blocks each reduce to
a box-constrained simplex problem, which we solve first. The
\(Q\)-block is a specialization of the update operation in the
Baum-Welch algorithm.

The box-constrained simplex problem is defined as follows:
Let \(a\in\rls^n_{\ge0}\), and let \(0\leq\ell_i\leq u_i\leq 1\) be bounds that satisfy
the strict inequalities \(\sum_i\ell_i<1<\sum_i u_i\).
The problem is to solve the optimization program
\begin{align}
  \max_{p\in\rls^n}\;\sum_i a_i\log p_i
  \quad\text{s.t.}\quad
  \mathbf 1^T p=1,\quad \ell_i\leq p_i\leq u_i,
  \label{eq:wf-problem}
\end{align}
which is a special case of the general nonlinear resource allocation
problem.  For example, see~\cite{patriksson2008survey}. We solved the
problem using the efficient, \(O(n\log(n))\) breakpoint search
algorithm for the continuous nonlinear knapsack
problem~\cite{bitran1981disaggregation,luss1975allocation}.  The
strict inequalities ensure Slater's condition for existence of a
feasible point \(p\) in the relative interior of the set satisfying
the constraints of the problem.

Our constraints on \(\bm{\pi}'\) and \(M'\) satisfy the definition of
the box-constrained simplex problem, so the optimization program of
Eq.~\eqref{eq:wf-problem} can be directly applied to maximize the
\(\pi\)-block over \(\bm{\pi}'\) and the \(M\)-block over \(M'\). For
the \(M\)-block, we apply the program separately to each column.  Here
are more details. To map the \(\pi\)-block optimization problem to
Eq.~\eqref{eq:wf-problem}, write \(a_i\defeq\Gamma^\pi_{i}\) with
\(i\in\{1,\ldots,|B|\}\) and \(p_i\defeq\pi(i)\). The lower bounds are
the \(\ell^\pi_{i}\in[0,1)\) described at the beginning of this
appendix, with \(\ell^{\pi}_{0}=0.8\) and \(\ell^{\pi}_{i}=0\) for
\(i\ne 0\).  We can set the upper bounds to \(u^{\pi}_{0}=1\). Column
stochasticity is enforced by the identity \(\bm{1}^{T}p=1\) in
Eq.~\eqref{eq:wf-problem}.

For the \(M\)-block, \(\mathcal L_M\) is a sum of terms each depending
only on a single column \((r,i)\) of the measurement matrices. Besides
entrywise bounds and column stochasticity constraints on \(M_r\), we
also have the identities of Eqs.~\eqref{eq:mr_twirl}
and~\eqref{eq:mr_twirl_1}. The \(M\)-block objective function
separates into independently maximizable objectives.  With the
parameterization given in the latter equations, the objective function
for \(M_{2}\) is
\[
  \underbrace{A \log a + B \log b + E \log e}_{a+b+e=1}
  \;+\;
  \underbrace{K \log c + F \log f}_{2c+f=1}
  \;+\;
  \underbrace{L \log d + J \log g}_{2d+g=1},
\]
\[
  \begin{aligned}
  A &= \Gamma_{00}+\Gamma_{11}, & B &= \Gamma_{01}+\Gamma_{10}, & E &= \Gamma_{20}+\Gamma_{21},\\
  K &= \Gamma_{02}+\Gamma_{12}, & F &= \Gamma_{22}, & & \\
  L &= \Gamma_{03}+\Gamma_{13}, & J &= \Gamma_{23}. & &
  \end{aligned}
\]
And for \(M_{1}\) the objective function is given by
\[
  \underbrace{A' \log a' + B' \log b' + G' \log g'}_{a'+b'+g'=1}
  \;+\;
  \underbrace{C' \log c' + E' \log e' + H' \log h'}_{c'+e'+h'=1}
  \;+\;
  \underbrace{D' \log d' + F' \log f' + J' \log j'}_{d'+f'+j'=1},
\]
\[
  \begin{aligned}
  A' &= \Gamma_{00}+\Gamma_{01}, & B' &= \Gamma_{10}+\Gamma_{11}, & G' &= \Gamma_{20}+\Gamma_{21},\\
  C' &= \Gamma_{02}, & E' &= \Gamma_{12}, & H' &= \Gamma_{22},\\
  D' &= \Gamma_{03}, & F' &= \Gamma_{13}, & J' &= \Gamma_{23}.
  \end{aligned}
\]
Each of the underbraced sums, along with the constraint equations underneath them, corresponds to an objective function that can be independently maximized as a non-linear resource allocation problem.

It remains to show how to optimize the \(Q\)-block.  Substituting
Eq.~\eqref{eq:Qreparam} into \(\mathcal L_Q\)  gives
\begin{align}
  \mathcal L_Q(\bm\theta)
  &=
  \opn{const} + A\log\me + B\log\teps + D_g\log l_g + D_e\log l_e\nonumber\\
  &+ C_g\log s_g + C_e\log s_e
  + E_g\log(1-s_g) + E_e\log(1-s_e),
  \label{eq:LQ}
\end{align}
where the coefficients are given in 
terms of the \(N_{ij}\) of Eq.~\eqref{eq:N-defn} as
\begin{align}
  A &= N_{00}+N_{11}, &
  B &= N_{01}+N_{10}, \\
  C_g &= N_{0g}+N_{1g}, &
  C_e &= N_{0e}+N_{1e}, \\
  D_g &= N_{g0}+N_{g1}, &
  D_e &= N_{e0}+N_{e1}, \\
  E_g &= N_{gg}, &
  E_e &= N_{ee}.
\end{align}
The leading constant ``\(\opn{const}\)'' in Eq.~\eqref{eq:LQ} arises from
the coefficients of \(1/2\) multiplying the seepage parameters and
shifts the objective without moving its maximizer. We drop this
constant below.

The constraints on \(\bm\theta\) are such that we can optimize
each of the following contributions to \(\mathcal L_{Q}(\bm{\theta})\) separately:
\begin{align}
  \cL_{Q,m}&=A\log\me + B\log\teps + D_g\log l_g + D_e\log l_e,\notag\\
  \cL_{Q,g}&=C_g\log s_g + E_g\log(1-s_g),\notag\\
  \cL_{Q,e}&=C_e\log s_e + E_e\log(1-s_e)
\end{align}
Optimizing \(\cL_{Q,g}\) and \(\cL_{Q,e}\) is accomplished by the Bernoulli
maximum likelihood estimators
\begin{align}
  s_g^\star = \frac{C_g}{C_g+E_g},
  \qquad
  s_e^\star = \frac{C_e}{C_e+E_e}.
  \label{eq:sstar}
\end{align}
Both are in \((0,1)\) provided the corresponding outcome counts are positive.
The  problem of optimizing \(\cL_{Q,m}\) is
\begin{align}
  \max_{\teps,l_g,l_e,\me\geq 0}\;
  A\log\me+B\log\teps+D_g\log l_g+D_e\log l_e
  \quad
  \text{s.t.}
  \quad
  \me+\teps+l_g+l_e=1
  \label{eq:Q-comp-subproblem}
\end{align}
Let \(N_{\mathrm c}\defeq A+B+D_g+D_e\).
We obtain the following closed-form solution
\begin{align}
  \me^\star = \frac{A}{N_{\mathrm c}},\quad
  \teps^\star = \frac{B}{N_{\mathrm c}},\quad
  l_g^\star = \frac{D_g}{N_{\mathrm c}},\quad
  l_e^\star = \frac{D_e}{N_{\mathrm c}}.
  \label{eq:Qstar}
\end{align}
The solutions in Eq.~\eqref{eq:sstar} and in Eq.~\eqref{eq:Qstar}
are the standard Baum-Welch update specialized to the parametrization
in Eq.~\eqref{eq:Qreparam}.

The M-step computes the next estimate of the parameters
\(\phi^{(\tau+1)}\) as the optimizer of
\(\mathcal Q(\phi\mid\phi^{(\tau)})\). The theory of EM algorithms for
hidden Markov models guarantees that this converges to a critical
point of the likelihood function for the observed data. See
Ref.~\cite{baum1970maximization}.

Convergence of the EM algorithm to a critical point can be slow.  Near
a fixed point \(\phi^\star\) of the EM map \(T:\cF\rightarrow\cF\)
defined by one E-step followed by the M-step of App.~\ref{app:em}, the
EM iteration converges linearly with an asymptotic rate that can be
very small. We accelerate convergence with a constrained SQUAREM-3
scheme~\cite{varadhan2008squarem,duvaradhan2020squarem}, modified to
keep every iterate feasible and ensure that the log-likelihood is
non-decreasing. The modification maintains convergence to a critical
point. The SQUAREM algorithm uses three iterates
\(\phi_{0}^{(\tau)}, \phi_{1}=T(\phi_{0}), \phi_{2}=T(\phi_{1})\) of
\(T\) and quadratic extrapolation to determine an improved estimate
\(\tilde\phi\) as a linear combination of \(\phi_{0}\), \(\phi_{1}\)
and \(\phi_{2}\). The estimate \(\tilde\phi\) satisfies all equalities
satisfied by all \(\phi_{i}\), such as probability normalization and the
measurement matrix identities of Eq.~\eqref{eq:mr_twirl}
and~\eqref{eq:mr_twirl_1}.  The next iterate is then
\(T(\tilde\phi)\), which requires that \(\tilde\phi\) is in the domain
of \(T\). This requirement need not be satisfied, that is, the
estimate \(\tilde\phi\) need not be feasible. To avoid this problem,
we take advantage of a domain extension of \(T\).  The constrained EM
map \(T:\mathcal F\to\mathcal F\) satisfies that for every
\(\phi\in\mathcal F\),
\begin{itemize}[nosep]
\item \(T(\phi)\in\mathcal F\),
\item \(\ell(T(\phi))\geq\ell(\phi)\),
\item \(T(\phi)=\phi\) only if \(\phi\) is a critical point of the
  constrained maximum likelihood problem of Eq.~\eqref{eq:constrained-mle}.
\end{itemize}
For convenience, we here treat \(\phi\) as a vector whose entries
include the entries of \(\bm{\pi}\) and the \(M_{r}\) as well as the
parameters \(\tilde\epsilon, l_{g},l_{e},s_{g},s_{e}\) that define
\(Q\).  The entries of this vector are non-negative and satisfy the
affine constraints determined by probability normalization, the
measurement matrix identities and
\(\tilde\epsilon+l_{g}+l_{e}\leq 1\), \(s_{g}\leq 1\) and
\(s_{e}\leq 1\).  We extend the domain of \(T\) to the set \(\cG\) of
values of \(\phi\) that satisfy these affine constraints.  \(\cG\)
does not respect the SPAM constraints in the definition of
\(\cF\). Inspection of the forward-backward definition of the
objective function defined in the E-step and optimized in the M-step
shows that the iteration is well defined even if \(\phi^{(\tau)}\) is in
\(\cG\), in which case the output of the M-step is in \(\cF\).

In SQUAREM, the estimate \(\tilde\phi\) is adjusted by a backtracking
method, basically until \(T(\tilde\phi)\) increases the log-likelihood
over that of the previous iterate \(\phi^{(\tau)}\). Our main
adjustment to SQUAREM as described in the reference is to force
backtracking if \(\tilde\phi\) is not in \(\cG\). If backtracking
fails after too many tries, the algorithm falls back to
\(\phi^{(\tau+1)}=\phi_{2}\).

\section{EM algorithm implementation notes}
\label{sec:notes}

We initialize the EM iterations with a random \(\phi^{(0)}\in\cF\),
where \(\phi^{(0)}\) is obtained from random bounded perturbations of
the reference point of App.~\ref{sec:coptimaldesign} ``projected''
back to \(\cF\). We used four random such starting points.
For  $Q$, we perturbed the reference point by adding Gaussian noise
of standard deviation \(0.005\) and ensured positivity by taking the maximum
of the resulting entries and \(10^{-6}\). We then normalized the columns
to ensure that their entries summed to one. From the resulting matrix \(Q'\),
we computed
\begin{align}
  m     &= \frac{1}{2}  (Q'_{00} + Q'_{11})\\
  \tilde{\epsilon} &= \frac{1}{2}(Q'_{01} + Q'_{10})\\
  l_g    &= \frac{1}{2}  (Q'_{20} + Q'_{21})\\
  l_e    &= \frac{1}{2} (Q'_{30} + Q'_{31})\\
  s_g &= (Q'_{02} + Q'_{12} + 2  (1 - Q'_{22})) / 3\\
  s_e &= (Q'_{03} + Q'_{13} + 2  (1 - Q'_{33})) / 3.
\end{align}
We need to
satisfy \(c_{1}= m+ \tilde{\epsilon}+ l_g+ l_e=1\), so we rescale
\((m,\tilde{\epsilon},l_{g},l_{e})\) by dividing by \(c_{1}\). We
obtain \(Q\) by reconstructing it from the final values of the
parameters \(m, \tilde{\epsilon},l_{g},l_{e},s_{g},s_{e}\).  For $\pi$
and $M$, to each entry of the reference point, we added a uniformly
distributed number on \((0, .005)\). To ensure strict positivity we
took the maximum of the resulting entries and \(10^{-12}\).  Since
\(\pi,M,Q\) so obtained need not satisfy all required constraints, we
projected them into the constrained set by applying a water filling
algorithm also used for the resource allocation problem with a
suitable objective function. If \(s_{e}<s_{g}\) we swapped \(s_{g}\) with \(s_{e}\) and \(l_{g}\)
with \(l_{e}\), thus ensuring that \(s_{e}\geq s_{g}\).  

\ifSubfilesClassLoaded{%
  \bibliographystyle{unsrt}
  \bibliography{leakagerb}
}{}

\end{document}


\ifSubfilesClassLoaded{
  \title{Determining parameter ranges by gauge search}
  \author{}
  \date{\today}
  \maketitle
}{
  \section{Determining parameter ranges by gauge search}
  \label{app:gauge search}
  \let\section\subsection
}

The model given in the main text is not identifiable. Distinct
parameter values can assign the same likelihood to every possible data
set, and two such values are called gauge-equivalent.  Given a
candidate parameter value such as the one returned by the EM algorithm
of App.~\ref{app:em}, we aim to calculate a set of feasible parameters
that contains all valid parameters that are gauge-equivalent to the
candidate. This bounding set defines constraints on the
gauge-equivalent parameters that contribute to the reported ranges of
inferred parameters.  Here we describe an algorithm that implements
the ideas outlined in Sect.~\ref{sec:gaugeopt} for computing such a
bounding set.

Let \(\pi_{\ML}\), \(M_{1,\ML}\), \(M_{2,\ML}\) and \(Q_{\ML}\) be
the maximum likelihood fit to the data, where the fit satisfies the
imposed constraints. We do not take advantage of the constraints on
the \(M_{r}\) in Eqs.~\eqref{eq:mr_twirl} and~\eqref{eq:mr_twirl_1}
implied by the randomized final permutation.  The fit determines the
gauge free matrices \(Y\) and \(A\) as defined in
Sect.~\ref{sec:gaugeopt}.  A gauge equivalent transition matrix \(Q\)
has the same eigenvalues as \(Y\) and \(Q_{\ML}\). The eigenvalues are
given by \(1,\lambda_{1},\lambda_{-}\geq \lambda_{+}\) and determined
in Sect.~\ref{sec:stochasticmodel}. For every gauge equivalent \(Q\),
its parameters \(\bm{f}=(\epsilon,l_{g},l_{e},s_{g},s_{e})\) are a
function of \(\theta=(l_{g},s_g)\), where \(\bm{f}(\theta)\) is
obtained by applying Eq.~\eqref{eq:lgsg}. Thus we can write
\(Q=Q(\theta)\). In addition to the variables introduced in
Eq.~\eqref{eq:lgsg} we introduce \(D=1-l_{g}-l_{e}\) for the probability
of remaining in the computational subspace and
\(\tilde\epsilon=\epsilon D\).  The diagonal and off-diagonal entries
of the computational block of \(Q\) are then given by
\(m=(1-\epsilon)D\), \(\tilde\epsilon\), which satisfy
\(m+\tilde\epsilon=D\), and \(m-\tilde\epsilon=\lambda_{1}\). We have
the identities
\begin{align}
  m(\theta)=\tfrac12\bigl(D(\theta)+\lambda_{1}\bigr),\qquad
  \tilde\epsilon(\theta)=\tfrac12\bigl(D(\theta)-\lambda_{1}\bigr),\qquad
  \epsilon(\theta)=\frac{\tilde\epsilon(\theta)}{D(\theta)}.
  \label{eq:m-teps}
\end{align}

It is shown in Sect.~\ref{sec:gaugeopt} that every
\(\pi, M_{1}, M_{2}, Q\) that are gauge equivalent to the maximum
likelihood fit are determined by a choice of \(\theta\) and a
\(4\times 4\) real invertible diagonal matrix \(\Sigma\) with
diagonal \(\bm{\sigma}=(\sigma_{1},\ldots,\sigma_{4})\).  Let \(M\)
be the combined measurement matrix defined in Sect.~\ref{sec:gaugeopt}. 
In terms of \(\bm{\sigma}\) and
\(\theta\) we have
\begin{align}
  M(\bm\sigma;\theta)=V\Sigma W(\theta)^{-1},
  \qquad
  \pi(\bm\sigma;\theta)=W(\theta)\Sigma^{-1}V^{-1}A.
  \label{eq:gauge-orbit}
\end{align}

The eigenvectors of \(Q\) are described in
Sect.~\ref{sec:stochasticmodel}. 
Explicit right eigenvectors \(w(\lambda)\) for eigenvalue \(\lambda\)
are  $w(\lambda_{1}) = (1, -1, 0, 0)^T$ and for
\(\lambda\in\{1,\lambda_{+},\lambda_{-}\}\) 
\begin{align}
  w(\lambda)=
  \begin{pmatrix}
    \tfrac12(\lambda-1+s_{g})(\lambda-1+s_{e})\\
    \tfrac12(\lambda-1+s_{g})(\lambda-1+s_{e})\\
    l_{g}(\lambda-1+s_{e})\\
    l_{e}(\lambda-1+s_{g})
  \end{pmatrix},
  \label{eq:wlambda}
\end{align}
where these forms can be obtained by directly solving for the
eigenvectors of \(R\) given in Eq.~\eqref{eq:Rinvariant}.
Specifically, the three-dimensional invariant subspace consists of
vectors of the form \((\tfrac{a}{2},\tfrac{a}{2},y,z)^{T}\).  The
restriction of \(Q\) acts on \((a,y,z)\) as
\begin{align}
  R=\begin{pmatrix}
    D & s_{g} & s_{e}\\
    l_{g} & 1-s_{g} & 0\\
    l_{e} & 0 & 1-s_{e}
  \end{pmatrix},
\end{align}
which has spectrum \(\opn{spec}(R)=\{1,\lambda_{+},\lambda_{-}\}\).
For \(\lambda\in\opn{spec}(R)\) the last two rows of
\((R-\lambda I)(a,y,z)^{T}=0\) give
\((\lambda-1+s_{g})\,y=l_{g}a\) and
\((\lambda-1+s_{e})\,z=l_{e}a\). The first row of the eigenvalue
identity is satisfied because \(\lambda\in\opn{spec}(R)\). Setting
\(a=(\lambda-1+s_{g})(\lambda-1+s_{e})\) gives
\(y=l_{g}(\lambda-1+s_{e})\) and \(z=l_{e}(\lambda-1+s_{g})\),
from which we obtain the vector of Eq.~\eqref{eq:wlambda}.

Not every value of \(\theta\) yields a valid \(\bm f(\theta)\). The parameters
\(l_{e},s_{e}\) in \(\bm{f}(\theta)\) must lie in \([0,1]\) and the values
\(m,\tilde\epsilon\) determined by \(\bm{f}(\theta)\) must be nonnegative. 
The resulting constraints are equivalent to requiring that 
\begin{align}
  r(\theta)=\bigl(l_{e},\,1-l_{e},\,s_{e},\,1-s_{e},\,m,\,\tilde\epsilon\bigr)
  \label{eq:r}
\end{align}
be entrywise nonnegative. Positivity of \(\tilde\epsilon\), along with $\lambda_1\ge 0$, also guarantees that \(l_{g}+l_{e}\leq 1\). 
We refer to the set of \(\theta\)
satisfying these constraints as the physical region, which is
given by
\begin{align}
  \cA^{\mathrm{phys}}
  =\bigl\{\theta=(l_{g},s_{g}):\;l_{g}> 0,\;s_{g}\geq 0,\; r(\theta)\geq 0\bigr\},
  \label{eq:Aphys}
\end{align}
where the last inequality is taken entrywise. The strict constraint on
\(l_{g}\) is to prevent a zero denominator when computing \(l_{e}\) as
a function of \(\theta\).

It suffices to confine the search for feasible pairs \(\theta\) to the
physical region \(\cA^{\mathrm{phys}}\).  The set of feasible pairs
need not be connected. The relabelling
\((l_{g},s_{g})\leftrightarrow(l_{e},s_{e})\) that exchanges ground
and excited states permutes the two leaked states but does not change
the spectrum \(\{1,\lambda_{1},\lambda_{+},\lambda_{-}\}\).  For large
seepage it can map the connected component of the feasible region that
contains the truth to a different connected component.

It is possible to obtain analytic expressions for the boundary of
\(\cA^{\mathrm{phys}}\). But to simplify the search and obtain an
initial grid of candidate feasible pairs \(\theta\) we determine
an enclosing band \(\cB\) with linear boundary segments that
contains \(\cA^{\mathrm{phys}}\). The enclosing band is defined in
the following proposition.

\begin{proposition}
  \label{prop:band}
  Every \(\theta\in\cA^{\mathrm{phys}}\) satisfies
  \(\delta_{-}\leq l_{g}+s_{g}\leq\delta_{+}\).
  Hence
  \begin{align}
    \cA^{\mathrm{phys}}\subseteq\cB
    :=\bigl\{(l_{g},s_{g}):\;l_{g}\geq 0,\;s_{g}\geq 0,\; \delta_{-}\leq l_{g}+s_{g}\leq\delta_{+}\bigr\}.
    \label{eq:band}
  \end{align}
\end{proposition}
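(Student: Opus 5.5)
The plan is to read the band inequality directly off the formula for \(l_{e}\) in Eq.~\eqref{eq:lgsg}, using only the nonnegativity constraints that define \(\cA^{\mathrm{phys}}\). Write \(t_{g}=l_{g}+s_{g}\), \(b=\delta_{+}+\delta_{-}\) and \(c=(\delta_{+}-\delta_{-})^{2}\), as in Sect.~\ref{sec:stochasticmodel}. For \(\theta\in\cA^{\mathrm{phys}}\), the parameters \(\bm f(\theta)\) are defined by Eq.~\eqref{eq:lgsg}. In particular,
\begin{align}
  l_{e}=\frac{c-(2t_{g}-b)^{2}}{4l_{g}} .
\end{align}

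The key step uses the first entry of \(r(\theta)\) in Eq.~\eqref{eq:r}, which gives \(l_{e}\geq 0\), together with the strict constraint \(l_{g}>0\) in Eq.~\eqref{eq:Aphys}. Multiplying by \(4l_{g}>0\) yields \(c\geq (2t_{g}-b)^{2}\). Since \(\delta_{\pm}\) are the real eigenvalue offsets determined from the spectrum (recall \(DRD^{-1}\) is symmetric, so \(c\geq 0\)), we have \(\sqrt{c}=|\delta_{+}-\delta_{-}|\). With the convention \(\lambda_{-}\geq\lambda_{+}\), i.e.\ \(\delta_{-}\leq\delta_{+}\), this equals \(\delta_{+}-\delta_{-}\). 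Taking square roots then gives \(|2t_{g}-b|\leq\delta_{+}-\delta_{-}\). Equivalently,
\begin{align}
  \delta_{+}+\delta_{-}-(\delta_{+}-\delta_{-})\leq 2t_{g}\leq \delta_{+}+\delta_{-}+(\delta_{+}-\delta_{-}),
\end{align}
that is, \(\delta_{-}\leq l_{g}+s_{g}\leq\delta_{+}\).

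The inclusion \(\cA^{\mathrm{phys}}\subseteq\cB\) then follows immediately. Membership in \(\cA^{\mathrm{phys}}\) already imposes \(l_{g}>0\) and \(s_{g}\geq 0\), and the previous step supplies the two linear band constraints.

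I do not expect a serious obstacle. The only points needing care are the bookkeeping of conventions. First, the labeling \(\delta_{\pm}\) must be consistent with \(\delta_{-}\leq\delta_{+}\) (via \(\lambda_{-}\geq\lambda_{+}\)), so that the square root of \(c\) is \(\delta_{+}-\delta_{-}\) and not its negative. Second, \(b\) and \(c\) must be the gauge-invariant quantities computed from the spectrum of \(Y\), so the inequality holds uniformly for every \(\theta\) in the physical region. As a sanity check, the same conclusion can be seen without the formula: the identity \(c=(t_{g}-t_{e})^{2}+4l_{g}l_{e}\) with \(l_{g},l_{e}\geq 0\) gives \(\sqrt{c}\geq|t_{g}-t_{e}|\), and \(t_{e}=b-t_{g}\).
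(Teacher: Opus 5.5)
Your proposal is correct and follows essentially the same route as the paper's proof: both use $l_{g}>0$ and $l_{e}=\bigl(c-(2t_{g}-b)^{2}\bigr)/(4l_{g})\geq 0$ to get $(2t_{g}-b)^{2}\leq c$, then apply $\sqrt{c}=\delta_{+}-\delta_{-}\geq 0$ and $b=\delta_{+}+\delta_{-}$ to conclude $\delta_{-}\leq l_{g}+s_{g}\leq\delta_{+}$. Both then finish with $l_{g},s_{g}\geq 0$. Your explicit check that the labeling $\lambda_{-}\geq\lambda_{+}$ corresponds to $\delta_{-}\leq\delta_{+}$ is exactly the convention the paper relies on implicitly.
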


\begin{proof}
  Let \(\theta\in\cA^{\mathrm{phys}}\), so \(l_{g}>0\) and, by
  Eq.~\eqref{eq:lgsg}, \(l_{e}=\bigl(c-(2t_{g}-b)^{2}\bigr)/(4l_{g})\geq 0\).
  Thus \((2t_{g}-b)^{2}\leq c\). Because \(b=\delta_{+}+\delta_{-}\) and
  \(\sqrt{c}=\delta_{+}-\delta_{-}\geq 0\) we get
  \(\lvert 2t_{g}-(\delta_{+}+\delta_{-})\rvert\leq\delta_{+}-\delta_{-}\),
  which implies that \(\delta_{-}\leq t_{g}=l_{g}+s_{g}\leq\delta_{+}\). Together with
  \(l_{g},s_{g}\geq 0\) these are the defining inequalities for \(\cB\).
\end{proof}

We cover the band \(\cB\) with an initial grid of candidate pairs
\(\theta\) and then apply a nested strategy for eliminating
infeasible pairs.  The grid is laid so that a grid step changes each
transition parameter by a controlled amount. Let $J(\theta)$ be the
\(5\times 2\) Jacobian of the parameters \(\bm{f}(\theta)\) with
respect to \(\theta\). The Jacobian is evaluated once, at the
maximum likelihood fit, which lies in \(\cA^{\mathrm{phys}}\).
Write the singular value decomposition of $J$ as
\(J=\sum_{i=1,2}\varsigma_{i}\,p_{i}q_{i}^{T}\), with singular
values \(\varsigma_{i}\), right singular vectors \(q_{i}\) and left
singular vectors \(p_{i}\). A step of length
\(\delta/\varsigma_{i}\) along \(q_{i}\) changes \(\bm f\) by
approximately \(\delta\,p_{i}\).  Spacing the grid by
\(\delta/\varsigma_{i}\) along \(q_{i}\) therefore gives to first order that
neighbouring grid points differ by at most \(\delta\) in each
transition parameter.

The singular vectors \(q_{1},q_{2}\) form an orthonormal basis of
the \(\theta\)-plane, so the grid is a rectangle in this frame,
which is in general rotated relative to the \((l_{g},s_{g})\)
axes. We size it to cover the bounding box \([0,\delta_{+}]^{2}\)
of \(\cB\).  We choose the center of the grid to be
\(\kappa=(\delta_{+}/2,\delta_{+}/2)\).  A point
\(\kappa+(x_{1},x_{2})\) in \(\cB\) has \(|x_{1}|,|x_{2}|\leq\delta_{+}/2\).
Define the half-extent
\begin{align}
  H_{i}=\tfrac{\delta_{+}}{2}\bigl(|q_{i,1}|+|q_{i,2}|\bigr).
\end{align}
Running the grid from \(-H_{i}\) to \(H_{i}\) about \(\kappa\)
along each \(q_{i}\) thus reaches every point of the bounding box,
hence of \(\cB\).  At spacing \(\delta/\varsigma_{i}\) this is
\(\lceil 2H_{i}\varsigma_{i}/\delta\rceil\) steps along
\(q_{i}\).  The grid construction is completed by
discarding candidate pairs \(\theta\) in the grid that do not lie
in \(\cA^{\mathrm{phys}}\). In this way we obtain the set of candidate pairs
\begin{align}
  \cS=\bigl\{\theta\ \text{in the grid}:l_{g}> 0,\; s_{g}\geq 0,\;r(\theta)\geq 0\}.
  \label{eq:S}
\end{align}

We screen the remaining candidate pairs \(\theta\) in
\(\cA^{\mathrm{phys}}\) for feasibility by checking necessary
conditions for there to exist \(\bm{\sigma}\) with no zero entries
such that \(M(\bm{\sigma};\theta)\) and
\(\pi(\bm{\sigma};\theta)\) satisfy the constraints on \(M\) and
\(\pi\). 

We first check that the matrix of right eigenvectors
\(W(\theta)\) has full rank and is well conditioned, and eliminate
\(\theta\) if not. This ensures that the expression for
\(M(\bm{\sigma};\theta)\) can be evaluated for all
\(\bm{\sigma}\).  In view of the form of the unnormalized
eigenvectors given in Eq.~\eqref{eq:wlambda}, we also eliminate
\(\theta\) if we have both \(\lambda-1+s_{g}=0\) and
\(\lambda-1+s_{e}=0\) for
\(\lambda\in\{1,\lambda_{+},\lambda_{-}\}\). 

The remaining checks are based on constraints in each of the
\(16\) orthants for \(\bm{\sigma}\) derived from the measurement
and preparation constraints.  The measurement quality constraints are
linear constraints on the entries of \(M(\bm{\sigma};\theta)\),
which in turn are linear in \(\bm{\sigma}\) according to
\begin{align}
  M_{kl}=\sum_{i}V_{ki}\,(W(\theta)^{-1})_{il}\,\sigma_{i}.
  \label{eq:Mkl_theta}
\end{align}
The measurement quality constraints can therefore be written in
the form
\begin{align}
  \ell^{M}\leq A_{M}(\theta)\,\bm\sigma\leq u^{M},
  \label{eq:Mbox}
\end{align}
where \(A_{M}(\theta)\) is the \(16\times 4\) matrix that collects
the coefficients of \(\bm{\sigma}\) in Eq.~\eqref{eq:Mkl_theta},
and the fixed bounds \(\ell^{M},u^{M}\) are the SPAM quality
bounds collected into \(16\)-dimensional vectors.  The
inequalities in Eq.~\eqref{eq:Mbox} hold entrywise.  Additional
constraints on the measurement matrix are due to
column-stochasticity of the full measurement matrices, which implies
the bounds
\begin{align}
  M_{0l}+M_{1l}\leq 1,\qquad M_{2l}+M_{3l}\leq 1\qquad\text{for each column }l.
  \label{eq:Mcolsum}
\end{align}

The constraints on the preparation include quality constraints of
the form \(\ell^{\pi}\leq\pi(\bm{\sigma};\theta)\leq u^{\pi}\) and
the normalization \(\mathbf 1^{T}\pi=1\). Because
\(\pi(\bm{\sigma};\theta)=W(\theta)\Sigma^{-1}V^{-1}A\), these
constraints are non-linear.  From this expression for
\(\pi(\bm{\sigma};\theta)\) we obtain the identity
\(\sigma_{i}=(V^{-1}A)_{i}/(W(\theta)^{-1}\pi)_{i}\).  Let
\(d_{i}=(W(\theta)^{-1}\pi)_{i}\). Since \(d_{i}\) is a linear
function of \(\pi\), we can determine the maximum and minimum
values \(d_{i}^{+}\) and \(d_{i}^{-}\) of \(d_{i}\) over the
polytope of \(\pi\) satisfying the quality constraints and
stochasticity.  By monotonicity of the function
\(d_{i}\mapsto (V^{-1}A)_{i}/d_{i}\) on each side of \(d_{i}=0\),
if \(d_{i}^{-}\) and \(d_{i}^{+}\) have the same sign, then the
values of \(\sigma_{i}\) are constrained to lie in the interval
\((V^{-1}A)_{i}[1/d_{i}^{+},1/d_{i}^{-}]\), where this interval is the
set obtained by multiplying each member of the given interval by
\((V^{-1}A)_{i}\). Here, \((V^{-1}A)_{i}\) may be negative or
zero. If \(d_{i}^{-}\) and \(d_{i}^{+}\) do not have the same
sign, then the implied constraint splits into two rays depending
on the sign of \(\sigma_{i}\). We apply these constraints when
considering each orthant in the last check below.

Based on the above reasoning, we next eliminate candidates
\(\theta\) for which it is not possible to satisfy the linear
constraints on \(\bm{\sigma}\) determined by the constraints on
\(M\) and the interval constraints on the \(\sigma_{i}\) obtained
from \(d_{i}^{-}\) and \(d_{i}^{+}\) when the latter have the same
sign. This is a linear program feasibility test. In our testing,
this eliminates a large fraction of the remaining candidates.

The last check requires iterating over all \(16\) orthants for
\(\bm{\sigma}\).  Fix a sign pattern \(\bm s\in\{\pm1\}^{4}\) and
consider the orthant on which \(\opn{sgn}(\sigma_{i})=s_{i}\) for
each \(i\).  Let \(s'_{i}=\opn{sgn}\bigl((V^{-1}A)_{i}\bigr)\),
which depends only on the data.  In this orthant, we have
\(\opn{sgn}(\qty(W^{-1}\pi)_{i})=s'_{i}s_{i}\).
In combination with the preparation constraints, the preparations consistent
with this orthant for \(\bm{\sigma}\) are in the polytope
\begin{align}
  \cP(\theta;\bm s)=\left\{\pi:\;
  \ell^{\pi}\leq\pi\leq u^{\pi},\;
  \mathbf 1^{T}\pi=1,\;
  (s'_{j}s_{j})\,(W(\theta)^{-1}\pi)_{j}\geq 0\ \ \forall j\right\}.
  \label{eq:Ppoly}
\end{align}
On \(\cP(\theta;\bm s)\), \(d_{i}\) has fixed sign and we can refine
the lower and upper bound calculation for \(d_{i}\) by solving two linear
programs to determine the lower and upper bound of \(d_{i}\) on \(\cP(\theta;\bm s)\).
It is possible for one of the implied bounds on $\sigma$ to be signed infinity. We obtain
\(d_{i}\in\qty[\min_{\pi\in\cP(\theta;\bm s)}d_{i},\,\max_{\pi\in\cP(\theta;\bm s)}d_{i}]\).
By monotonicity on the orthant of the relationship between \(d_{i}\) and \(\sigma_{i}\),
we obtain a vectorized box constraint
\begin{align}
  \ell^{\sigma}(\theta;\bm s)\leq\bm\sigma\leq u^{\sigma}(\theta;\bm s)
  \label{eq:sigmabox}
\end{align}
on \(\bm{\sigma}\).
The last check is completed by checking whether the linear constraints on
\(M\) and the box constraints of Eq.~\eqref{eq:sigmabox} are feasible.
If they are not, we discard \(\theta\).

The set \(L\) of remaining feasible \(\theta\) in the grid
determines the gauge-compatible parameter intervals that we
report for a given experiment as described in
Sect.~\ref{sec:gaugeopt}.

\ifSubfilesClassLoaded{%
  \bibliographystyle{unsrt}
  \bibliography{leakagerb}
}{}

\end{document}


\end{document}